\newcommand{\bigo}[1]{\ensuremath{\mathcal{O}(#1)}}
\renewcommand{\Pr}[1]{\ensuremath{\text{Pr}\left[#1\right]}}
\newcommand{\true}{\ensuremath{\textsc{true}}}
\newcommand{\false}{\ensuremath{\textsc{false}}}
\newcommand{\alg}{\ensuremath{\mathcal{A}}}
\newcommand{\lbl}{\ensuremath{\ell}}
\newcommand{\disconnectset}{\ensuremath{D}}
\newcommand{\lock}{\ensuremath{\texttt{lock}}}
\newcommand{\state}{\ensuremath{\texttt{state}}}
\newcommand{\phase}{\ensuremath{\texttt{phase}}}
\newcommand{\lockset}{\ensuremath{L}}
\newcommand{\replyset}{\ensuremath{R}}
\newcommand{\winset}{\ensuremath{W}}
\newcommand{\holdset}{\ensuremath{H}}
\newcommand{\applyset}{\ensuremath{A}}
\newcommand{\candidateset}{\ensuremath{C}}
\newcommand{\compete}{\ensuremath{P}}
\newcommand{\msg}[2]{\ensuremath{\texttt{#1(#2)}}}
\newcommand{\Lock}{\ensuremath{\textsc{Lock}}}
\newcommand{\Unlock}{\ensuremath{\textsc{Unlock}}}
\algrenewcommand\ALG@beginalgorithmic{\small}
\algrenewcommand\alglinenumber[1]{\scriptsize #1:}
\newcommand{\multiline}[1]{%
  \begin{tabularx}{\dimexpr\linewidth-\ALG@thistlm}[t]{@{}X@{}}
    #1
  \end{tabularx}
}
\newif\ifarxiv
\newif\ifcomment
\newif\iffigabbrv
\newcommand{\figtext}{\iffigabbrv Fig.\else Figure\fi}
\title{Local Mutual Exclusion for Dynamic, Anonymous, Bounded Memory Message Passing Systems}
\titlerunning{Local Mutual Exclusion for Dynamic, Limited Message Passing Systems}
\author{Joshua J. Daymude}{Biodesign Center for Biocomputing, Security and Society, Arizona State University, Tempe, AZ}{jdaymude@asu.edu}{https://orcid.org/0000-0001-7294-5626}{NSF (CCF-1733680), U.S.\ ARO (MURI W911NF-19-1-0233), the Momental Foundation's Mistletoe Research Fellowship, and the ASU Biodesign Institute.}
\author{Andr\'ea W. Richa}{School of Computing and Augmented Intelligence, Arizona State University, Tempe, AZ}{aricha@asu.edu}{https://orcid.org/0000-0003-3592-3756}{NSF (CCF-1733680, CCF-2106917) and U.S.\ ARO (MURI W911NF-19-1-0233).}
\author{Christian Scheideler}{Department of Computer Science, Paderborn University, Paderborn, Germany}{scheideler@upb.de}{https://orcid.org/0000-0002-5278-528X}{DFG Project SCHE 1592/6-1.}
\authorrunning{J.\ J.\ Daymude, A.\ W.\ Richa, and C.\ Scheideler}
\keywords{Mutual exclusion, dynamic networks, message passing, concurrency}
\begin{document}

\maketitle

\begin{abstract}
    \textit{Mutual exclusion} is a classical problem in distributed computing that provides \textit{isolation} among concurrent action executions that may require access to the same shared resources.
    Inspired by algorithmic research on distributed systems of weakly capable entities whose connections change over time, we address the \textit{local mutual exclusion} problem that tasks each node with acquiring exclusive locks for itself and the maximal subset of its ``persistent'' neighbors that remain connected to it over the time interval of the lock request.
    Using the established \textit{time-varying graphs} model to capture adversarial topological changes, we propose and rigorously analyze a local mutual exclusion algorithm for nodes that are anonymous and communicate via asynchronous message passing.
    The algorithm satisfies \textit{mutual exclusion} (non-intersecting lock sets) and \textit{lockout freedom} (eventual success with probability 1) under both semi-synchronous and asynchronous concurrency.
    It requires $\bigo{\Delta}$ memory per node and messages of size $\Theta(1)$, where $\Delta$ is the maximum number of connections per node.
    We conclude by describing how our algorithm can implement the pairwise interactions assumed by \textit{population protocols} and the concurrency control operations assumed by the \textit{canonical amoebot model}, demonstrating its utility in both passively and actively dynamic distributed systems.
\end{abstract}

\section{Introduction} \label{sec:intro}

Distributed computing research has grown increasingly concerned with characterizing the capabilities and limitations of systems composed of \textit{dynamic entities} (or \textit{nodes}).
Recently, these studies have considered both biological collectives such as social insects~\cite{AndresArroyo2018-stochasticapproach,Chandrasekhar2018-trailrepair,Ghaffari2015-distributedhousehunting}, spiking neural networks~\cite{Su2019-spikebasedwinnertakeall}, and DNA and molecular computers~\cite{Chalk2018-freezingsimulates,Patitz2014-introductiontilebased,Woods2013-activeselfassembly} as well as engineered systems such as overlay networks and the Internet of Things (IoT)~\cite{Feldmann2020-surveyalgorithms}, swarm and modular self-reconfigurable robotics~\cite{Flocchini2019-distributedcomputing,Hamann2018-swarmrobotics,Piranda2018-designingquasispherical,Yim2007-modularselfreconfigurable}, and programmable matter~\cite{Angluin2006-computationnetworks,Daymude2021-canonicalamoebot,Derakhshandeh2014-amoebotba,Michail2016-simpleefficient}.
Entities in these systems often make decisions based only on their own knowledge (or ``state''), locally-perceptible measures of their environment (e.g., pheromones, the number or density of nearby neighbors, etc.), and information communicated to them by their neighbors.

Compared to the static setting where acting nodes' neighborhoods do not change, designing correct distributed algorithms in the dynamic setting is a challenging task.
In this paper, we use the established \textit{time-varying graphs} (TVGs) model~\cite{Casteigts2018-journeythrough,Casteigts2012-timevaryinggraphs} to capture adversarial changes in network topology and consider weakly capable nodes that are \textit{anonymous}, have \textit{bounded memory}, communicate via \textit{asynchronous message passing}, and execute their algorithms \textit{semi-synchronously} or \textit{asynchronously}.
The classical \textit{mutual exclusion} problem~\cite{Dijkstra1965-mutualexclusion} regulates how nodes enter their critical sections using locks, defined as a pair of operations \Lock\ and \Unlock.
Our \textit{local mutual exclusion} problem---designed to enable nodes to locally coordinate their interactions in the dynamic, concurrent setting---defines \Lock\ as a node acquiring locks for itself and the maximal subset of its ``persistent'' neighbors that remain connected to it while the request is processed.
A core challenge in designing such a \Lock\ operation in the dynamic setting lies in the nodes' inability to know, when issuing lock requests, which neighbors will be persistent and which others will later be removed.

This locking mechanism greatly simplifies the design of local distributed algorithms in highly dynamic settings by providing \textit{isolation} for concurrently executed actions.
An algorithm's actions can first be designed for the simpler sequential setting in which at most one node is active (potentially changing the system configuration) at a time.
When considering the concurrent setting, each action is then treated as a critical section wrapped in a \Lock/\Unlock\ pair; this ensures that no two simultaneously executing actions can involve overlapping neighborhoods.
Our locking mechanism gracefully handles neighbor disconnections, ensuring that the locked and connected subset of an acting node's neighborhood remains fixed throughout the execution of its action, just as it would be in the sequential setting.
Thus, our locking mechanism restricts the algorithm designer's concern from all possible complications arising from concurrent dynamics to just one: New connections may concurrently be established with a node while it is executing an action.

\subparagraph*{Our Contributions.}
We summarize our contributions as follows.
\begin{itemize}
    \item A formalization of the \textit{local mutual exclusion problem} in an extension of the time-varying graphs model that captures topological changes, asynchronous message passing, and semi-synchronous or asynchronous node activation (Section~\ref{sec:prelims}).
    
    \item A randomized algorithm implementing the \Lock\ and \Unlock\ operations for local mutual exclusion that satisfies \textit{mutual exclusion} (non-intersecting lock sets) and \textit{lockout freedom} (eventual success with probability 1) under both semi-synchronous and asynchronous concurrency.
    This algorithm requires $\bigo{\Delta}$ memory per node and messages of size $\Theta(1)$, where $\Delta$ is the maximum number of connections per node (Sections~\ref{sec:alg}--\ref{sec:async}).
    
    \item Applications of this local mutual exclusion algorithm to \textit{population protocols}~\cite{Angluin2006-computationnetworks}, establishing an underlying mechanism for guaranteeing pairwise interactions in a broader class of concurrent activation models, and the \textit{canonical amoebot model}~\cite{Daymude2021-canonicalamoebot}, implementing the model's concurrency control operations (Section~\ref{sec:applications}).
\end{itemize}

\paragraph*{Related Work}

Designing algorithms for concurrent computing environments is a challenging task requiring the careful control of simultaneously interacting processes and coordinated access to shared resources.
Since its introduction by Dijsktra~\cite{Dijkstra1965-mutualexclusion}, the closely related \textit{mutual exclusion} problem has received much attention from the research community.
For shared memory systems, mutual exclusion can be conveniently solved by atomic operations like compare-and-swap, test-and-set, and fetch-and-add~\cite{Herlihy1991-waitfreesynchronization}.
In contrast, our present focus is on asynchronous message passing.
Classical approaches to mutual exclusion in asynchronous message passing systems often assume that nodes have unique identifiers and global coordination (see, e.g., the survey~\cite{Singhal1993-taxonomydistributed}) or make use of unbounded counters like Lamport clocks (e.g.,~\cite{Maekawa1985-sqrtnmutex,Ricart1981-optimalalgorithm}), neither of which are appropriate for the anonymous, bounded memory nodes we consider here.
The most relevant classical algorithm to our setting is the \textit{arrow protocol}~\cite{Demmer1998-arrowdistributed,Raymond1989-treemutex} that requires only constant memory per node to locally maintain a spanning tree rooted at the node with exclusive access to the shared resource; however, despite recent improvements~\cite{Ghodselahi2017-dynamicanalysis,Khanchandani2019-arvydistributed}, it is not clear how to adapt this protocol to systems with dynamic topologies.

Our local variant of the mutual exclusion problem blurs the usual delineation between processes and the shared resources they're accessing as nodes compete to gain exclusive access to their neighborhoods.
Like the well-studied \textit{$k$-mutual exclusion}~\cite{Fischer1979-resourceallocation} and \textit{group mutual exclusion}~\cite{Joung2000-asynchronousgroup,Joung2002-congenialtalking} variants, ours allows multiple nodes to be in their critical sections simultaneously; however, these variants allow multiple process to access the same shared resource(s) concurrently while ours requires that concurrently locked neighborhoods be non-intersecting.
This constraint is similar to ensuring the active nodes form a \textit{distance-3 independent set} from graph theory and is related to the more general \textit{$(\alpha, \beta)$-ruling sets}~\cite{Awerbuch1989-networkdecomposition} recently solved under the \textsf{LOCAL} and \textsf{CONGEST} models~\cite{Kuhn2018-deterministicdistributed,Schneider2013-symmetrybreaking}; however, these distributed algorithms rely on static topologies, unique identifiers, and synchronous message delivery.
The recent results on mutual exclusion for \textit{fully anonymous} systems~\cite{Raynal2020-mutualexclusion}, like our nodes and their neighborhoods, assume that neither the processes nor the shared resources have unique identifiers.
However, like the earlier classical results above and other recent models of weak finite automata~\cite{Emek2013-stoneage,Esparza2020-classificationweak}, these do not extend to dynamic network topologies.

Research on \textit{mobile ad hoc networks} (MANETs) directly embraces node and edge dynamics, modeling wireless communication links that form and fail as nodes move in and out of each other's transmission radii.
Mutual exclusion has been exhaustively studied under MANET models~\cite{Attiya2010-efficientrobust,Baldoni2002-distributedmutual,Benchaiba2004-distributedmutual,Chen2005-selfstabilizingdynamic,Sharma2014-tokenbased,Walter2001-mutualexclusion}, and many of those ideas inspired recent work on mutual exclusion for intersection traffic control for autonomous vehicles~\cite{Shehu2020-distributedmutual,Wu2015-distributedmutual}.
Mutual exclusion for MANETs is almost always solved using a token-based approach, sometimes combined with the imposition of a logical structure like a ring or tree.
These approaches only apply to competitions for a single shared resource or critical section per token type; our nodes' competitions over their local neighborhoods would need one token type per neighboring node which is not addressed by prior work.
More relevant to our local variant of mutual exclusion are randomized backoff mechanisms for local contention resolution used by MANETs and wireless networks~\cite{Bender2005-adversarialcontention,Cali2000-ieee80211,Capetanakis1979-treealgorithms} to ensure no two nodes are broadcasting in overlapping neighborhoods; however, these rely on nodes' chosen backoff delays to correspond to a consistent wall clock that is incompatible with our weaker model of concurrency.
In any case, the standard MANET communication model of wireless broadcast with time-ordered, instantaneous receipt of messages is more powerful than our asynchronous message passing.
Like MANETs, algorithms for \textit{self-stabilizing overlay networks} (see~\cite{Feldmann2020-surveyalgorithms} for a recent survey) similarly embrace node and edge dynamics, but often use more memory than our present algorithm and assume unique node identifiers.

Finally, we briefly highlight related models of \textit{dynamic networks}, i.e., those whose structural properties change over time.
Our model is closely related to the \textit{time-varying graphs} (TVGs) model~\cite{Casteigts2018-journeythrough,Casteigts2012-timevaryinggraphs} which unifies prior models of dynamic networks by capturing graph structural evolution over time through adversarial dynamics.
We join recent work on message passing algorithms for TVGs that address the challenge of rapidly changing network topology~\cite{Altisen2021-selfstabilizingsystems}.
The local nature of our mutual exclusion problem enables us to weaken the assumptions considered by prior works in this area.
For example, we allow messages to have arbitrary but finite delays akin to asynchronous message passing, we assume weaker ``semi-synchronous'' and asynchronous models of concurrency, and we trade globally unique node identifiers for local port labels.
In Section~\ref{sec:applications}, we demonstrate how the rapid dynamics modeled by TVGs combined with these weak assumptions on node capabilities facilitate application of our mutual exclusion algorithm to both systems with \textit{passive} dynamics~\cite{Angluin2006-computationnetworks,Soloveichik2008-computationfinite}, in which nodes have no control over topological changes, and those with \textit{active} dynamics~\cite{Daymude2021-canonicalamoebot,Michail2021-distributedcomputation,Michail2017-connectivitypreserving} in which nodes control the connections they establish and sever (e.g., via movements).

\section{Preliminaries} \label{sec:prelims}

\subsection{Computational Model} \label{subsec:model}

We consider a distributed system composed of a fixed set of nodes $V$.
Each node is assumed to be \textit{anonymous}, lacking a unique identifier, and has a local memory storing its \textit{state}.
Nodes communicate with each other via message passing over a communication graph whose topology changes over time.
We model this topology using a \textit{time-varying graph} $\mathcal{G} = (V, E, T, \rho)$ where $V$ is the set of nodes, $E$ is a (static) set of undirected pairwise edges between nodes, $T = \{0, \ldots, t_{\text{max}}\}$ for some (possibly infinite) $t_{\text{max}} \in \mathbb{N}$ is called the \textit{lifetime} of $\mathcal{G}$, and $\rho : E \times T \to \{0,1\}$ is the \textit{presence} function indicating whether or not a given edge exists at a given time.
A \textit{snapshot} of $\mathcal{G}$ at time $t \in T$ is the undirected graph $G_t = (V, \{e \in E : \rho(e, t) = 1\})$ and the \textit{neighborhood} of a node $u \in V$ at time $t \in T$ is the set $N_t(u) = \{v \in V : \rho(\{u, v\}, t) = 1\}$.
For $i \geq 0$, the $i$-th \textit{round} lasts from time $i$ to the instant just before time $i+1$; thus, the communication graph in round $i$ is $G_i$.

We assume that an adversary controls the presence function $\rho$ and that $E$ is the complete set of edges on nodes $V$; i.e., we do not limit which edges the adversary can introduce.
The only constraint we place on the adversary's topological changes is $\forall t \in T, u \in V, |N_t(u)| \leq \Delta$, where $\Delta > 0$ is the fixed number of \textit{ports} per node.
When the adversary establishes a new connection between nodes $u$ and $v$, it must assign the endpoints of edge $\{u, v\}$ to open ports on $u$ and $v$ (and cannot do so if either node has no open ports).
Node $u$ locally identifies $\{u, v\}$ using its corresponding \textit{port label} $\lbl \in \{1, \ldots, \Delta\}$ and $v$ does likewise.
For convenience of notation, we use $\lbl_u(v)$ to refer to the label of the port on node $u$ that is assigned to the edge $\{u, v\}$; this mapping of port labels to nodes is not available to the nodes.
Edge endpoints remain in their assigned ports (and thus labels remain fixed) until disconnection, but nodes $u$ and $v$ may label $\{u, v\}$ differently and their labels are not known to each other a priori.
Each node has a \textit{disconnection detector} that adds the label of any port whose connection is severed to a set $\disconnectset \subseteq \{1, \ldots, \Delta\}$.
A node's disconnection detector provides it a snapshot of $\disconnectset$ whenever it starts an action execution (see below) and then resets $\disconnectset$ to $\emptyset$.\footnote{Without this assumption, the adversary could disconnect an edge assigned to port $\lbl$ of node $u$ and then immediately connect a different edge to $\lbl$, causing an indistinguishability issue for node $u$.}

Nodes communicate via message passing.
A node $u$ sends a message $m$ to a neighbor $v$ by calling $\textsc{Send}(m, \lbl_u(v))$.
Message $m$ remains in transit until either $v$ receives and processes $m$ at a later round chosen by the adversary, or $u$ and $v$ are disconnected and $m$ is lost.
Multiple messages in transit from $u$ to $v$ may be received by $v$ in a different order than they were sent.
A node always knows from which port it received a given message.

All nodes execute the same distributed algorithm $\alg$, which is a set of \textit{actions} each of the form $\langle label\rangle: \langle guard\rangle \to \langle operations\rangle$.
An action's \textit{label} specifies its name.
Its \textit{guard} is a Boolean predicate determining whether a node $u$ can execute it based on the state of $u$ and any message in transit that $u$ may receive.
An action is \textit{enabled} for a node $u$ if its guard is \true\ for $u$; a node $u$ is \textit{enabled} if it has at least one enabled action.
An action's \textit{operations} specify what a node does when executing the action, structured as
\ifarxiv

\begin{enumerate}
    \item Receiving at most one message chosen by the adversary,
    \item A finite amount of internal computation and state updates, and
    \item At most one call to $\textsc{Send}(m, \ell)$ per port label $\ell$.
\end{enumerate}
\else
(\textit{i}) receiving at most one message chosen by the adversary, (\textit{ii}) a finite amount of internal computation and state updates, and (\textit{iii}) at most one call to $\textsc{Send}(m, \ell)$ per port label $\ell$.
\fi

Each node executes its own instance of $\alg$ independently, sequentially (executing at most one action at a time), and reliably (meaning we do not consider crash or Byzantine faults).
We assume an adversary controls the timing of node activations and action executions.
When the adversary activates a node, it also chooses exactly one of the node's enabled actions for the node to execute; we note that this choice must be compatible with any message the adversary chooses to deliver to the node.
In this work, we primarily focus on \textit{semi-synchronous} activations, which we interpret in the time-varying graph context to mean that in each round, the adversary activates any (possibly empty) subset of enabled nodes concurrently and the activated nodes execute their specified actions within that round.
In Section~\ref{sec:async}, we additionally consider \textit{asynchronous} activations in which action executions may span arbitrary finite time intervals.
We only constrain the adversary by \textit{weak fairness}, meaning it must activate nodes such that any continuously enabled action is eventually executed and any message in transit on a continuously existent edge is eventually processed.

\subsection{Local Mutual Exclusion} \label{subsec:problem}

In the classical problem of \textit{mutual exclusion}, nodes enter their critical sections using locks, defined as a pair of operations \Lock\ and \Unlock\ (or ``acquire'' and ``release'').
A node issues a lock request by calling \Lock; once acquired, it is assumed that a node eventually releases these locks by calling \Unlock.
Our \textit{local mutual exclusion} variant is concerned with nodes acquiring exclusive access to themselves and their immediate neighbors, though in the present context of dynamic networks, these neighborhoods may change over time.

\ifarxiv
\begin{figure}[t]
    \centering
    \includegraphics[width=0.9\textwidth]{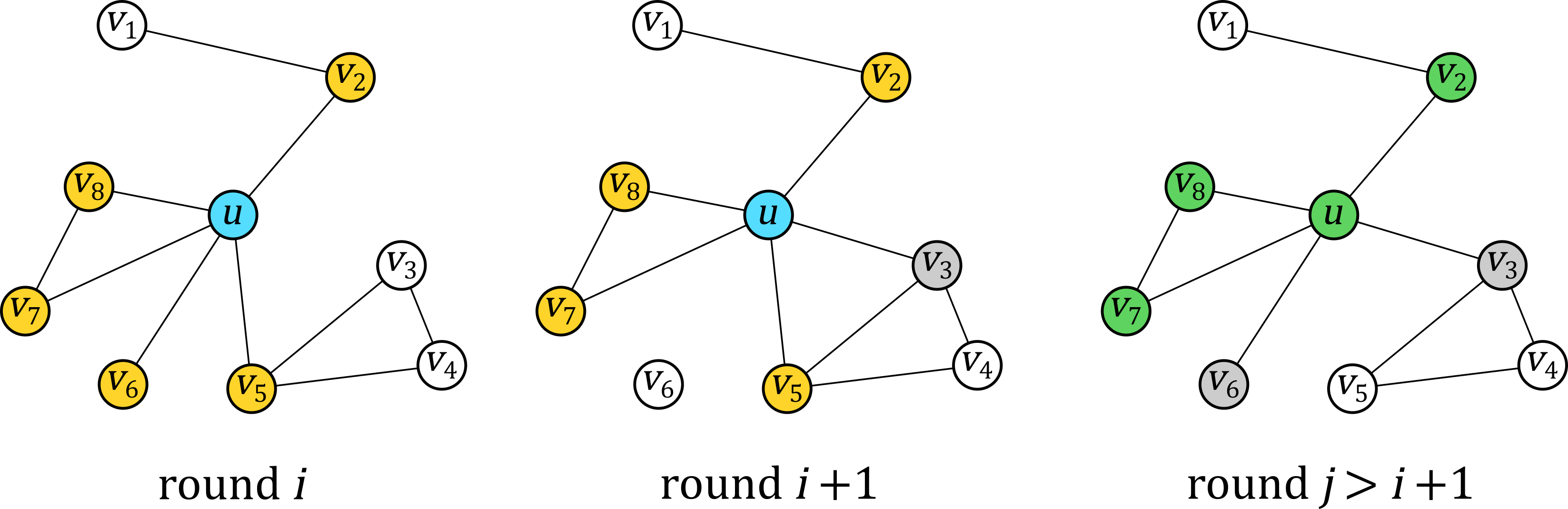}
    \caption{A node $u$ (blue) and its persistent neighborhood (yellow) over time.
    Node $u$ succeeds in locking itself and its persistent neighbors (green) in round $j$.
    Neighbors that are disconnected from $u$ before round $j$ (such as $v_6$ at time $i+1$) as well as new neighbors that connect to $u$ from time $i+1$ to $j$ (grey) are not persistent.}
    \label{fig:persistent}
\end{figure}
\else\fi

Formally, each node $u$ stores a variable $\lock \in \{\bot, 0, \ldots, \Delta\}$ that is equal to $\bot$ if $u$ is unlocked, $0$ if $u$ has locked itself, and $\lbl_u(v) \in \{1, \ldots, \Delta\}$ if $u$ is locked by $v$.
The \textit{lock set} of a node $u$ in round $i$ is $\mathcal{L}_i(u) = \{v \in N_i(u) : \lock(v) = \ell_v(u)\}$ which additionally includes $u$ itself if $\lock(u) = 0$.
Suppose that in round $i$, a node $u$ calls \Lock\ to issue a lock request of its current closed neighborhood $N_i[u] = \{u\} \cup N_i(u)$.
This lock request \textit{succeeds} at some later round $j > i$ if round $j$ is the first in which $\mathcal{L}_j(u) = \{u\} \cup \{v \in N_i(u) : \forall t \in [i, j], \{u, v\} \in G_t\}$; i.e., $j$ is the earliest round in which $u$ obtains locks for itself and every \textit{persistent} neighbor that remained connected to $u$ in rounds $i$ through
\ifarxiv
$j$ (see \figtext~\ref{fig:persistent}).
\else
$j$.
\fi
Our goal is to design an algorithm $\alg$ implementing \Lock\ and \Unlock\ that satisfies the following properties:

\begin{itemize}
    \item \textit{Mutual Exclusion.} For all rounds $i \in T$ and all pairs of nodes $u, v \in V$, $\mathcal{L}_i(u) \cap \mathcal{L}_i(v) = \emptyset$.
    
    \item \textit{Lockout Freedom.} Every issued lock request eventually succeeds with probability 1.
\end{itemize}

Following the long tradition of mutual exclusion problem definitions, our local mutual exclusion problem is defined in terms of mutual exclusion and fairness properties.
However, because each \lock\ variable points to at most one node per time, it is impossible for two nodes' lock sets to intersect, trivially satisfying the mutual exclusion property.
Nevertheless, satisfying lockout freedom remains challenging, especially in highly dynamic settings.
When issuing lock requests, nodes do not know which of their connections will remain stable and which will disconnect by the time their coordination is complete.
Thus, our problem variant captures what it means for nodes to lock their maximal persistent neighborhoods despite unpredictable and rapid topological changes.

\section{Algorithm for Local Mutual Exclusion} \label{sec:alg}

Our randomized algorithm for the local mutual exclusion problem specifies actions for the execution of \Lock\ and \Unlock\ operations satisfying mutual exclusion and lockout freedom.
An execution of the \Lock\ operation by a node $u$ is organized into two phases: a \textit{preparation phase} (Algorithm~\ref{alg:lockprep}) in which $u$ determines and notifies the nodes $L(u)$ it intends to lock, and a \textit{competition phase} (Algorithm~\ref{alg:lockcompete}) in which $u$ attempts to lock all nodes in $L(u)$, contending with any other nodes $v$ for which $L(v) \cap L(u) \neq \emptyset$.
An execution of the \Unlock\ operation (Algorithm~\ref{alg:unlock}) by node $u$ is straightforward, simply notifying all nodes in $L(u)$ that their locks are released.
All local variables used in our algorithm are listed in Table~\ref{tab:variables} as they appear in the pseudocode.
In a slight abuse of notation, we use $N[u]$ and the subsets thereof to represent both the nodes in the closed neighborhood of $u$ and the port labels of $u$ they are connected to.
For clarity of presentation, the algorithm pseudocode allows for a node to send messages to itself (via ``port 0'') just as it sends messages to its neighbors, though in reality these self-messages would be implemented with in-memory variable updates.

\begin{table}[t]
    \centering
    \caption{The notation, domain, initialization, and description of the local variables used in the algorithm for local mutual exclusion by a node $u$.}
    \label{tab:variables}
    \begin{tabular}{cp{99pt}cp{208pt}}
        \toprule
        \textbf{Var.} & \textbf{Domain} & \textbf{Init.} & \textbf{Description} \\
        \midrule
        \lock & $\{\bot, 0, \ldots, \Delta\}$ & $\bot$ & $\bot$ if $u$ is unlocked, $0$ if $u$ has locked itself, and $\lbl_u(v)$ if $u$ is locked by $v$ \\
        \state & $\{\bot, \textsc{prepare}, \textsc{compete}$, $\textsc{win}, \textsc{locked}, \textsc{unlock}\}$ & $\bot$ & The lock state of node $u$ \\
        \phase & $\{\bot, \textsc{prepare}, \textsc{compete}\}$ & $\bot$ & The algorithm phase node $u$ is in \\
        \lockset & $\subseteq N[u]$ & $\emptyset$ & Ports (nodes) $u$ intends to lock \\
        \replyset & $\subseteq N[u]$ & $\emptyset$ & Ports via which $u$ has received \msg{ready}{}, \msg{ack-lock}{}, or \msg{ack-unlock}{} responses \\
        $\winset$ & $\subseteq N[u] \times \{\true, \false\}$ & $\emptyset$ & Port-outcome pairs of \msg{win}{} messages $u$ has received \\
        \holdset & $\subseteq N[u]$ & $\emptyset$ & Ports (nodes) on hold for the competition to lock $u$ \\
        \applyset & $\subseteq N[u]$ & $\emptyset$ & Ports (nodes) of applicants that can join the competition to lock $u$ \\
        \candidateset & $\subseteq N[u]$ & $\emptyset$ & Ports (nodes) of candidates competing to lock $u$ \\
        \compete & $\subseteq C(u) \times \{0, \ldots, K-1\}$ & $\emptyset$ & Port-priority pairs of the candidates \\
        \bottomrule
    \end{tabular}
\end{table}

\begin{algorithm}[t]
\caption{The \Lock\ Operation: Preparation Phase for Node $u$} \label{alg:lockprep}
\begin{algorithmic}[1]
    \State \textsc{InitLock}: On \Lock\ being called $\to$  \Comment{Initiator initiates a lock request.}
        \Indent
            \If {$\state = \bot$} \Comment{Only one locking operation at a time.}
                \State \Call{CleanUp}{\,}.
                \State Set $\state \gets \textsc{prepare}$ and $\lockset \gets N[u]$. 
                \ForAll{$\lbl \in \lockset$} \Call{Send}{\msg{prepare}{}, $\lbl$}.
                \EndFor
            \EndIf
        \EndIndent
    \State \textsc{ReceivePrepare}: On receiving \msg{prepare}{} via port $\lbl$ $\to$
        \Indent
            \State \Call{CleanUp}{\,}.
            \If {$\phase = \textsc{compete}$} set $\holdset \gets \holdset \cup \{\lbl\}$. \Comment{Put $\lbl$ on hold if already competing.}
            \Else {}
                \State Set $\applyset \gets \applyset \cup \{\lbl\}$ and $\phase \gets \textsc{prepare}$. \Comment{Add $\lbl$ as an applicant otherwise.}
                \State \Call{Send}{\msg{ready}{}, $\lbl$}.
            \EndIf
        \EndIndent     
    \State \textsc{ReceiveReady}: On receiving \msg{ready}{} via port $\lbl$ $\to$
        \Indent
            \State \Call{CleanUp}{\,}.
            \State Set $\replyset \gets \replyset \cup \{\lbl\}$.
        \EndIndent     
    \State \textsc{CheckStart:} $(\state = \textsc{prepare}) \wedge (\replyset = \lockset)$ $\to$  \Comment{All \msg{ready}{} messages received.}
        \Indent
            \State \Call{CleanUp}{\,}.
            \State Set $\state \gets \textsc{compete}$, $\replyset \gets \emptyset$, and $\winset \gets \emptyset$.
            \State Choose priority $p \in \{0, \ldots, K-1\}$ uniformly at random.
            \ForAll{$\lbl \in \lockset$} \Call{Send}{\msg{request-lock}{$p$}, $\lbl$}.
            \EndFor
        \EndIndent
    \State \textsc{CleanUp}: $(\phase \neq \bot) \vee (\state = \textsc{unlock})$ $\to$
        \Indent
            \State \Call{CleanUp}{\,}.
        \EndIndent
    \Function {CleanUp}{\,}  \Comment{Helper function for processing disconnections $\disconnectset$.}
        \ForAll{$\lbl \in \disconnectset$}
            \If {$\lock = \lbl$} $\lock \gets \bot$.
            \EndIf
            \State Remove $\lbl$ from all sets: $\lockset \gets \lockset \setminus \{\lbl\}$, $\replyset \gets \replyset \setminus \{\lbl\}$, $\winset \gets \winset \setminus \{(\lbl, \cdot)\}$, $\holdset \gets \holdset \setminus \{\lbl\}$,
            \State $\applyset \gets \applyset \setminus \{\lbl\}$, $\candidateset \gets \candidateset \setminus \{\lbl\}$, and $\compete \gets \compete \setminus \{(\lbl, \cdot)\}$.
        \EndFor
        \If {$\candidateset = \emptyset$} 
            \ForAll {$\lbl \in \holdset$} \Call{Send}{\msg{ready}{}, $\lbl$}.
            \EndFor
            \State Set $\applyset \gets \applyset \cup \holdset$ and $\holdset \gets \emptyset$. \Comment{All nodes on hold become applicants.}
            \If {$\applyset \neq \emptyset$} set $\phase \gets \textsc{prepare}$.
            \Else {} set $\phase \gets \bot$.
            \EndIf
        \EndIf
    \EndFunction
\end{algorithmic}
\end{algorithm}

\begin{algorithm}[t]
\caption{The \Lock\ Operation: Competition Phase for Node $u$} \label{alg:lockcompete}
\begin{algorithmic}[1]
    \State \textsc{ReceiveRequest}: On receiving \msg{request-lock}{$p$} via port $\lbl$ $\to$
        \Indent
            \State \Call{CleanUp}{\,}.
            \If {$\lbl \in \applyset$} set $\applyset \gets \applyset \setminus \{\lbl\}$ and $\candidateset \gets \candidateset \cup \{\lbl\}$.
            \EndIf
            \State Set $\compete \gets \compete \cup \{(\lbl, p)\}$ and $\phase \gets \textsc{compete}$. \Comment{Close competition.}
         \EndIndent
    \State \textsc{CheckPriorities}: $(\phase = \textsc{compete}) \wedge (|\candidateset| = |\compete|)$ $\to$ \Comment{All priorities received.}
        \Indent
            \State \Call{CleanUp}{\,}.
            \If {$\lock = \bot$ and $\exists (\lbl, p) \in \compete$ with a unique highest $p$}
                \State \Call{Send}{\msg{win}{\true}, $\lbl$} and \Call{Send}{\msg{win}{\false}, $\lbl'$} for all $\lbl' \in \candidateset \setminus \{\lbl\}$.
            \Else {} \Call{Send}{\msg{win}{\false}, $\lbl$} for all $\lbl \in \candidateset$.
            \EndIf
            \State Reset $\compete \gets \emptyset$.  \Comment{Competition is over.}
        \EndIndent
    \State \textsc{ReceiveWin}: On receiving \msg{win}{$b$} via port $\lbl$ $\to$
        \Indent
            \State \Call{CleanUp}{\,}.
            \State Set $\winset \gets \winset \cup \{(\lbl, b)\}$.
        \EndIndent
     \State \textsc{CheckWin}: $(\state = \textsc{compete}) \wedge (|\winset| = |\lockset|)$ $\to$ \Comment{All \msg{win}{$b$} replies received.}
        \Indent
            \State \Call{CleanUp}{\,}.
            \If {$\exists(\cdot, \false) \in \winset$}  \Comment{Start new locking attempt.}
                \State Choose priority $p \in \{0, \ldots, K-1\}$ uniformly at random.
                \ForAll{$\lbl \in \lockset$} \Call{Send}{\msg{request-lock}{$p$}, $\lbl$}.
                \EndFor
            \Else {}  \Comment{Succeeded in locking.}
                \State Set $\state \gets \textsc{win}$ and reset $\replyset \gets \emptyset$.
                \ForAll{$\lbl \in \lockset$} \Call{Send}{\msg{set-lock}{}, $\lbl$}.
                \EndFor
            \EndIf
            \State Reset $\winset \gets \emptyset$.
        \EndIndent
    \State \textsc{ReceiveSetLock}: On receiving \msg{set-lock}{} via port $\lbl$ $\to$
        \Indent
            \State Set $\lock \gets \lbl$ and $\candidateset \gets \candidateset \setminus \{\lbl\}$.
            \State \Call{CleanUp}{\,}.
            \State \Call{Send}{\msg{ack-lock}{}, $\lbl$}.
        \EndIndent
    \State \textsc{ReceiveAckLock}: On receiving \msg{ack-lock}{} via port $\lbl$ $\to$
        \Indent
            \State \Call{CleanUp}{\,}.
            \State Set $\replyset \gets \replyset \cup \{\lbl\}$.
        \EndIndent
    \State \textsc{CheckDone}: $(\state = \textsc{win}) \wedge (\replyset = \lockset)$ $\to$  \Comment{All lock acknowledgements received.}
        \Indent
            \State \Call{CleanUp}{\,}.
            \State Set $\state \gets \textsc{locked}$ and reset $\replyset = \emptyset$.
            \State \Return $\lockset$. \Comment{Locking complete.}
        \EndIndent
\end{algorithmic}
\end{algorithm}

\begin{algorithm}[t]
\caption{The $\Unlock$ Operation for Node $u$} \label{alg:unlock}
\begin{algorithmic}[1]
    \State \textsc{InitUnlock}: On \Unlock\ being called $\to$  \Comment{Initiator initiates an unlock.}
    \Indent
        \If {$\state = \textsc{locked}$} \Comment{Only one \Unlock\ per successful \Lock.}
            \State \Call{CleanUp}{\,}.
            \State Set $\state \gets \textsc{unlock}$ and reset $\replyset \gets \emptyset$.
            \ForAll {$\lbl \in \lockset$} \Call{Send}{\msg{release-lock}{}, $\lbl$}.
            \EndFor
        \EndIf
    \EndIndent
    \State \textsc{ReceiveRelease}: On receiving \msg{release-lock}{} via port $\lbl$ $\to$
       \Indent
          \State \Call{CleanUp}{\,}.
          \State Set $\lock \gets \bot$ and \Call{Send}{\msg{ack-unlock}{}, $\lbl$}.
       \EndIndent
    \State \textsc{ReceiveAckUnlock}: On receiving \msg{ack-unlock}{} via port $\lbl$ $\to$
       \Indent
          \State \Call{CleanUp}{\,}.
          \State Set $\replyset \gets \replyset \cup \{\lbl\}$.
       \EndIndent
    \State \textsc{CheckUnlocked}: $(\state = \textsc{unlock}) \wedge (\replyset = \lockset)$ $\to$  \Comment{All unlock acknowledgements received.}
        \Indent
            \State \Call{CleanUp}{\,}.
            \State Reset $\state \gets \bot$ and $\replyset = \emptyset$. \Comment{Unlocking complete.}
        \EndIndent
\end{algorithmic}
\end{algorithm}

We refer to nodes that call \Lock/\Unlock\ as \textit{initiators} and the nodes that are being locked or unlocked as \textit{participants}; it is possible for a node to be an initiator and participant simultaneously.
Initiators progress through a series of \textit{lock states} associated with the \state\ variable; participants advance through the algorithm's \textit{phases} as indicated by the \phase\ variable.
We first describe the algorithm from an initiator's perspective and then describe the complementary participants' actions.
A special \textsc{CleanUp} helper function ensures that the nodes adapt to any disconnections affecting their variables that may have occurred since they last acted, so we omit the handling of these disconnections in the following description.

When an initiator $u$ calls \Lock, it advances to the \textsc{prepare} state, sets $\lockset(u)$ to all nodes in its closed neighborhood $N[u]$, and then sends \msg{prepare}{} messages to all nodes of $\lockset(u)$.
Once it has received \msg{ready}{} responses from all nodes of $\lockset(u)$, it advances to the \textsc{compete} state and joins the competitions for each node in $\lockset(u)$ by sending \msg{request-lock}{$p$} messages to all nodes of $\lockset(u)$, where $p$ is a priority chosen uniformly at random from $\{0, \ldots, K - 1\}$ for a fixed $K = \Theta(1)$.
It then waits for the outcomes of these competitions.
If it receives at least one \msg{win}{\false} message, it lost this competition and must compete again.
Otherwise, if all responses are \msg{win}{\true}, it advances to the \textsc{win} state and sends \msg{set-lock}{} messages to all nodes of $\lockset(u)$.
Once it has received \msg{ack-lock}{} responses from all nodes of $\lockset(u)$, it advances to the \textsc{locked} state indicating
$\lockset(u)$ now represents the lock set $\mathcal{L}(u)$.

A participant $v$ is responsible for coordinating the competition among all initiators that want to lock $v$.
To delineate successive competitions, $v$ distinguishes among initiators that are \textit{candidates} in the current competition, \textit{applicants} that may join the current competition, and those that are \textit{on hold} for the next competition.
When $v$ receives a \msg{prepare}{} message from an initiator $u$, it either puts $u$ on hold if a competition is already underway or adds $u$ as an applicant and replies \msg{ready}{} otherwise.
Participant $v$ promotes its applicants to candidates when $v$ receives their \msg{request-lock}{$p$} messages.
Once all such messages are received from the competition's candidates, $v$ notifies the one with the unique highest priority of its success and all others of their failure (or, in the case of a tie, all candidates fail).
A winning competitor is removed from the candidate set while all others remain to try again; once the candidate set is empty, $v$ promotes all initiators that were on hold to applicants.
Finally, when $v$ receives a \msg{set-lock}{} message, it sets its \lock\ variable accordingly and acknowledges this with an \msg{ack-lock}{} response.

\section{Analysis} \label{sec:analysis}

In this section, we prove the following theorem.

\begin{theorem} \label{thm:locking}
    If all nodes start with the initial values given by Table~\ref{tab:variables}, the algorithm satisfies the mutual exclusion and lockout freedom properties under semi-synchronous concurrency, requires $\bigo{\Delta}$ memory per node and messages of size $\Theta(1)$, and has at most two messages in transit along any edge at any time.
\end{theorem}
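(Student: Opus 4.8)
The plan is to handle the three easy assertions quickly and then concentrate on lockout freedom. \emph{Mutual exclusion} is immediate from the definition of $\mathcal{L}_i$: if $w\in\mathcal{L}_i(u)$ then either $w=u$ with $\lock(u)=0$, or $w\in N_i(u)$ with $\lock(w)=\lbl_w(u)$; in the first case no $v\neq u$ has $u\in\mathcal{L}_i(v)$ (that would require $\lock(u)=\lbl_u(v)\neq 0$), and in the second $\lock(w)$ names exactly one port of $w$, hence one neighbor, forcing $v=u$. So $\mathcal{L}_i(u)\cap\mathcal{L}_i(v)=\emptyset$ whenever $u\neq v$. The \emph{resource bounds} follow by inspecting Table~\ref{tab:variables}: $\lock,\state,\phase$ have $\bigo{1}$-size domains, while $\lockset,\replyset,\winset,\holdset,\applyset,\candidateset,\compete$ are subsets of $N[u]$ (optionally tagged with a $\bigo{1}$-size label), hence $\bigo{\Delta}$ bits, and every message carries at most a priority in $\{0,\dots,K-1\}$ with $K=\Theta(1)$ or a single bit. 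For the ``at most two messages in transit'' bound I would fix an edge $\{u,v\}$ and note that only two ``conversations'' can be active on it --- one in which $u$ locks $v$ and one in which $v$ locks $u$ --- and that within a conversation the initiator sends its next message to the participant only after receiving the previous reply, while the participant emits each reply only in reaction to a fresh message it must first receive (tracing \textsc{ReceivePrepare}, \textsc{CheckPriorities}, \textsc{ReceiveSetLock}, \textsc{ReceiveRelease}); each conversation thus keeps at most one message in flight, so at most two coexist on $\{u,v\}$.

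For lockout freedom I would first record the structural invariants (proved by induction over executions) that make ``success'' meaningful. When $u$ calls \Lock\ in round $i$, $\lockset(u)$ is set to $N_i[u]$ and thereafter only shrinks, and it shrinks exactly by the ports \textsc{CleanUp} removes when they appear in $\disconnectset$; hence in any later round $j$, $\lockset(u)$ equals $\{u\}$ together with the $v\in N_i(u)$ for which $\{u,v\}\in G_t$ for all $t\in[i,j]$ --- the persistent neighborhood. A node sets $\lock\gets\lbl$ only upon receiving \msg{set-lock}{} via port $\lbl$, and an initiator reaches \textsc{locked} only after every port of $\lockset(u)$ has returned \msg{ack-lock}{}; with mutual exclusion this gives that, when $u$ enters \textsc{locked} in round $j$, $\mathcal{L}_j(u)=\lockset(u)$, i.e., the request succeeds in round $j$. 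I would also isolate the ``no double win'' invariant: a participant $v$ never resolves a new competition while a previous winner still sits in $\candidateset(v)$, since that winner stops resending \msg{request-lock}{} the moment it wins everywhere, keeping $|\candidateset(v)|>|\compete(v)|$ and so blocking \textsc{CheckPriorities} until its \msg{set-lock}{} arrives --- hence at most one node at a time can believe it holds $v$.

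It then remains to show every request reaches \textsc{locked} in finite time with probability $1$. The core lemma is deadlock/livelock-freedom, whose key observation is that a competing initiator submits its \msg{request-lock}{} messages to \emph{all} of $\lockset$ at once (in \textsc{CheckStart} or \textsc{CheckWin}) or to none; thus a ``pending'' initiator (awaiting \msg{win}{b} replies) has already delivered its current submission toward every participant, so the set of messages that are in transit or still to be sent, under weak fairness, always drains to a configuration in which some \textsc{CheckPriorities} or \textsc{CheckWin} becomes enabled --- the competition ``dependency graph'' has no permanent cycle. From this one derives that each competition round terminates (the applicant set is frozen once \msg{prepare}{}s start going to $\holdset$, so finitely many candidates ever join and all eventually submit), that losers retry and global winners exit via \msg{set-lock}{}, that once $\candidateset(v)=\emptyset$ the held nodes are promoted and answered \msg{ready}{} (so the preparation phase of any request completes), and --- using the standing assumption that acquired locks are eventually released --- that a locked participant eventually becomes unlocked. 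For the probabilistic step, whenever an initiator $c$ is competing, the number of other initiators sharing any of its at most $\Delta+1$ competitions is at most $\Delta(\Delta+1)$, a fixed constant, so the probability that $c$'s freshly drawn uniform priority is the strict maximum in all of its competitions that round is at least $(1/K)\,((K-1)/K)^{\Delta(\Delta+1)}=:q>0$, again fixed; since competition rounds keep occurring, $c$ wins them all (and then completes the \msg{set-lock}{}/\msg{ack-lock}{} handshake) within finitely many rounds with probability $1$.

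The hard part will be assembling this last step rigorously. Beyond deadlock-freedom itself, the delicate point is that \textsc{CheckPriorities} lets nobody win at a participant while it is locked, so $c$ must hit a round in which \emph{all} of its participants are simultaneously unlocked when they process that round's submissions \emph{and} its priority is favorable; to rule out positive-probability starvation one must combine the eventual-release assumption (each obstructing lock is transient, and only the $\bigo{\Delta^2}$ nodes at distance two from $c$ can obstruct it) with the constant per-round win probability $q$ and the independence of $c$'s successive priority draws, then conclude via a Borel--Cantelli / geometric-tail argument. Once that interleaving is pinned down, the theorem follows by chaining ``preparation completes, $c$ wins all competitions with probability $1$, the lock handshake completes'' with the structural invariants identifying the resulting lock set with the persistent neighborhood.
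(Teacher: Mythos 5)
Most of your outline tracks the paper's proof: the trivial mutual exclusion argument, the memory/message-size accounting, the two-messages-per-edge bound via paired request/response ``conversations,'' and your dependency-graph argument for deadlock freedom (the paper proves the competition dependency graph stays acyclic and then uses finiteness of $V$) all match. Your simpler lower bound $(1/K)\bigl((K-1)/K\bigr)^{\Delta(\Delta+1)}$ on the probability of a strictly maximal priority is a perfectly good substitute for the paper's conditional computation. The genuine gap is exactly where you flag the ``hard part,'' and the tool you propose there does not close it. You plan to combine eventual release of locks with the constant priority-win probability $q$ and a Borel--Cantelli/geometric-tail argument. But the event you can lower-bound is only that $c$'s priority is the strict maximum; in a \emph{closed} trial, i.e., one in which some $v\in\lockset(c)$ has $\lock(v)\neq\bot$ when \textsc{CheckPriorities} fires, $v$ sends \msg{win}{\false} to every candidate regardless of priorities. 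Transience of each obstructing lock does not by itself prevent the adversary from scheduling the $\bigo{\Delta^2}$ distance-two initiators so that some participant of $c$ is locked during every one of $c$'s trials; favorable priority draws are then useless, and no tail argument over $c$'s draws can help unless you first show that open trials recur (indeed, that closed trials are only finitely many).

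The missing idea is the deterministic monotonicity argument of the paper's Lemma~\ref{lem:success}: from $c$'s second trial on, $c\in\candidateset(v)$ for every $v\in\lockset(c)$, so $\phase(v)=\textsc{compete}$ and every later \msg{prepare}{} is parked in $\holdset(v)$ --- no new node enters $\candidateset(v)\cup\applyset(v)$ while $c$ competes. Hence the only nodes that can ever lock $v$ again are the current members of $\candidateset(v)\cup\applyset(v)$ plus at most the one node already holding $v$; each time $v$ is locked, \textsc{ReceiveSetLock} removes that winner from $\candidateset(v)$, and, being subsequently held rather than re-admitted as an applicant, it cannot re-enter the competition while $c$ remains a candidate. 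So the set of potential lockers of $v$ strictly shrinks with each lock acquisition, $v$ is locked only finitely often during $c$'s competition, and therefore all but finitely many of $c$'s trials are open --- only then does your per-open-trial probability bound apply and yield success with probability 1. You actually have the ingredients (the ``frozen applicant set'' and ``winner removed from $\candidateset$'' observations), but you deploy them only to argue that individual trials terminate; they must instead be assembled into the finitely-many-closed-trials claim, which replaces the probabilistic interleaving argument you sketch.
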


The algorithm in Section~\ref{sec:alg} is written with respect to local port labels; for ease of presentation, we use the corresponding nodes throughout this analysis and write $X_i(u)$ to denote the local variable $X$ of node $u$ at the start of round $i$.
We begin with two straightforward lemmas demonstrating the eventual execution of enabled actions.

\begin{lemma} \label{lem:enableexecute}
    Apart from \textsc{CleanUp}, every enabled action will eventually be executed.
\end{lemma}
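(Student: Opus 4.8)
The plan is to show that the obstacle to executing a non-\textsc{CleanUp} enabled action is only transient: weak fairness guarantees that a \emph{continuously} enabled action is eventually executed, so it suffices to argue that any such action, once enabled, remains enabled until the node is next activated (and that activation is the only thing needed to execute it). The key point is that the guards of the non-\textsc{CleanUp} actions fall into two groups, and in both cases enabledness is ``sticky'' between activations of the owning node. First, the \emph{message-triggered} actions (\textsc{ReceivePrepare}, \textsc{ReceiveReady}, \textsc{ReceiveRequest}, \textsc{ReceiveWin}, \textsc{ReceiveSetLock}, \textsc{ReceiveAckLock}, \textsc{ReceiveRelease}, \textsc{ReceiveAckUnlock}) and the \Lock/\Unlock\ ``On being called'' actions (\textsc{InitLock}, \textsc{InitUnlock}): for these I would invoke the model's weak-fairness clause directly, which states that any message in transit on a continuously existent edge is eventually processed, and analogously that a pending \Lock/\Unlock\ call is eventually serviced. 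Only a disconnection can remove the trigger, but then the action was not continuously enabled and there is nothing to prove. Second, the \emph{state-triggered} actions (\textsc{CheckStart}, \textsc{CheckPriorities}, \textsc{CheckWin}, \textsc{CheckDone}, \textsc{CheckUnlocked}): their guards are conjunctions of a \state/\phase\ test and a set-equality test such as $\replyset=\lockset$ or $|\winset|=|\lockset|$.

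The main step is therefore to verify that, for a state-triggered action, no \emph{other} action of the same node can be taken first that would falsify its guard without executing it. Here I would argue that between two consecutive activations of node $u$ nothing changes its local variables except the \textsc{CleanUp} processing of disconnections, and that \textsc{CleanUp} preserves the relevant invariants: \textsc{CleanUp} removes a port $\lbl$ from \emph{every} set simultaneously (including from both $\replyset$ and $\lockset$, and both $\winset$ and $\lockset$), so an equality like $\replyset=\lockset$ or $|\winset|=|\lockset|$ is maintained under disconnection, and the \state/\phase\ component of the guard is untouched by \textsc{CleanUp} except in the $\state=\textsc{unlock}$ / $\phase\neq\bot$ case handled by the \textsc{CleanUp} \emph{action}, which is exactly the action excluded from the statement. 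When $u$ \emph{is} activated, the adversary must choose an enabled action; if it does not choose the state-triggered action in question, the only alternatives are message-receipt actions or \textsc{CleanUp}, and one checks from the pseudocode that none of these falsifies the guard of a pending \textsc{Check$\ast$} action without that action having first fired (e.g.\ receiving another \msg{ready}{} only grows $\replyset$ toward $\lockset$; a \textsc{CleanUp} invocation inside any action only shrinks the matched pair together). Hence the \textsc{Check$\ast$} action stays continuously enabled, and weak fairness finishes the argument.

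I would organize the write-up as: (1) enumerate the non-\textsc{CleanUp} actions and split them into message/call-triggered vs.\ state-triggered; (2) dispatch the first group by a one-line appeal to weak fairness on messages and on pending \Lock/\Unlock\ calls; (3) for the second group, state and use the ``persistence under \textsc{CleanUp}'' observation together with a short case check that no competing action of the same node disables the guard prematurely; (4) conclude via weak fairness that the continuously enabled action is executed. The step I expect to be the real obstacle is (3): it requires a careful, somewhat tedious reading of the \textsc{CleanUp} helper and of each \textsc{Check$\ast$} guard to confirm that disconnections genuinely keep the two sides of each set-equality in lockstep, and that no interleaved message receipt can, say, shrink $\lockset$ below $\replyset$ — this is where a subtle bug in the algorithm would surface, so it deserves the most care even though each individual check is routine.
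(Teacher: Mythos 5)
Your proposal is correct and follows essentially the same route as the paper: the \textsc{Receive*} (message-triggered) actions are dispatched by the weak-fairness guarantee on message delivery, and each \textsc{Check*} action is shown to remain \emph{continuously} enabled because \textsc{CleanUp} removes a disconnected port from both sides of the relevant set equality while no other action of the node can falsify the guard before it fires, after which weak fairness yields eventual execution. The paper's proof is simply a terser version of your step (3), arguing it explicitly for \textsc{CheckStart} and stating that the other \textsc{Check*} actions follow analogously.
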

\begin{proof}
    Any enabled \textsc{Receive*} action whose guard depends only on the receipt of some message must eventually be executed because it is assumed that every message in transit is eventually processed (unless the edge is disconnected, at which point the message is lost and the action is no longer enabled).
    Thus, it remains to consider the \textsc{Check*} actions.
    
    Suppose \textsc{CheckStart} is enabled for a node $u$ in some round $i$; i.e., $\state_i(u) = \textsc{prepare}$ and $\replyset_i(u) = \lockset_i(u)$.
    When $\state = \textsc{prepare}$, only \textsc{CheckStart} can change the $\state$ variable or reset $R$ to $\emptyset$.
    Any execution of the \textsc{CleanUp} action does not change the $\state$ variable and maintains $\replyset(u) = \lockset(u)$ since it removes any disconnected neighbors from both sets.
    So \textsc{CheckStart} remains continuously enabled and thus must eventually be executed by the weakly fair adversary.
    An analogous argument also applies to \textsc{CheckPriorities}, \textsc{CheckWin}, \textsc{CheckDone}, and \textsc{CheckUnlocked}.
\end{proof}

Lemma~\ref{lem:enableexecute} shows that an enabled action will eventually be executed, but we also need to know that the actions become enabled in the first place.
One potential obstacle is that \textsc{Check*} actions by a node $u$ need to receive all responses from the nodes in $\lockset(u)$ before becoming enabled.
If some of nodes in $\lockset(u)$ disconnect and their corresponding response messages are lost, the \textsc{Check*} action may be disabled indefinitely.
This is one role of the \textsc{CleanUp} action: removing disconnections from the algorithm's variables so other actions stop waiting for neighbors that no longer exist.
We call such an action \textit{pre-enabled} if it is currently disabled but would become enabled after \textsc{CleanUp} is executed.

\begin{lemma} \label{lem:preenabled}
    Every pre-enabled action eventually becomes enabled.
\end{lemma}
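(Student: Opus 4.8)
The plan is to reduce the lemma to a single fact: a node that has a pre‑enabled action cannot stay idle forever, and the moment it executes \emph{any} action the \textsc{CleanUp} procedure that prefixes essentially every action processes the pending disconnections $\disconnectset$, which (by the very definition of ``pre‑enabled'') is exactly what is needed to enable the action.

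First I would argue that only the \textsc{Check*} actions can be pre‑enabled. A \textsc{Receive*} action is guarded solely by the arrival of a message, and running \textsc{CleanUp} cannot conjure a message; \textsc{InitLock} and the \Unlock-initiating action are guarded by \Lock/\Unlock\ having been invoked together with a predicate on \state\ that \textsc{CleanUp} never alters; and the \textsc{CleanUp} action is guarded only by $(\phase \neq \bot) \vee (\state = \textsc{unlock})$, which \textsc{CleanUp} does not make true when it is false (using the side invariant $\phase(u) = \bot \Rightarrow \candidateset(u) = \applyset(u) = \holdset(u) = \emptyset$, so that \textsc{CleanUp} leaves $\phase = \bot$ alone). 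Hence it suffices to treat \textsc{CheckStart}, \textsc{CheckPriorities}, \textsc{CheckWin}, \textsc{CheckDone}, and \textsc{CheckUnlocked}.

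Now fix such an action $\alpha$ pre‑enabled at a node $u$ in round $i$, and establish two things. \emph{(1) $u$ eventually executes some action.} The guard of $\alpha$ already constrains $u$'s configuration: $u$ is either self‑participating in its own \Lock\ (shepherding port $0$ along the self‑message chain $\msg{prepare}{}\!\to\!\msg{ready}{}\!\to\!\msg{request-lock}{}\!\to\!\msg{win}{}\!\to\!\msg{set-lock}{}\!\to\!\msg{ack-lock}{}$, which keeps either $0 \in \applyset(u) \cup \candidateset(u)$ — hence $\phase(u) \neq \bot$ — or an unprocessed self‑message in transit), or it is unlocking (so $\state(u) = \textsc{unlock}$). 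In the cases where $\phase(u) \neq \bot$ or $\state(u) = \textsc{unlock}$, the \textsc{CleanUp} action is enabled, and since nothing but an action of $u$ can modify the set on which $\alpha$ waits, it stays enabled until $u$ acts, so weak fairness forces an execution; in the remaining case the pending self‑message (realized as an in‑memory update, hence never lost) is eventually processed, which likewise executes a \textsc{Receive*} action. \emph{(2) After $u$ runs \textsc{CleanUp} with the then‑current $\disconnectset$, the guard of $\alpha$ holds.} Here I would observe that \textsc{CleanUp} deletes every disconnected port \emph{consistently} from $\lockset, \replyset, \winset, \candidateset$, and $\compete$, so the equality $\replyset = \lockset$ and the cardinality conditions $|\candidateset| = |\compete|$, $|\winset| = |\lockset|$ are restored exactly when the pending disconnections are removed — which is the content of ``$\alpha$ is pre‑enabled'' — and whatever action $u$ used to trigger the cleanup cannot spoil this (a \textsc{Receive*} action only adds to $\replyset$, $\winset$, or $\compete$, i.e.\ moves toward the guard, and the \textsc{CleanUp} action touches nothing else that a \textsc{Check*} guard reads). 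Then Lemma~\ref{lem:enableexecute} completes the argument.

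I expect the main obstacle to be the configuration invariant in part (1): that a node blocked on a \textsc{Check*} action always has its \textsc{CleanUp} action enabled or a self‑message in flight. Proving this requires a careful case‑by‑case trace of the port‑$0$ self‑message chain against the conditional reset of \phase\ inside \textsc{CleanUp} (which drops \phase\ to $\bot$ precisely when $\candidateset = \applyset = \holdset = \emptyset$), with particular attention to the \textsc{win}-state window after $u$ has already processed its own \msg{set-lock}{} and \msg{ack-lock}{} messages and is merely waiting on its neighbors' \msg{ack-lock}{} replies. Everything else — the characterization of pre‑enabled actions and the restoration of each \textsc{Check*} guard after cleanup — is routine bookkeeping about which variables \textsc{CleanUp} modifies.
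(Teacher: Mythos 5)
Your overall route is the same as the paper's: reduce the lemma to showing that the node hosting the pre-enabled \textsc{Check*} action eventually runs the \textsc{CleanUp} routine, by arguing that either the guard $(\phase \neq \bot) \vee (\state = \textsc{unlock})$ of the \textsc{CleanUp} action holds and persists (so weak fairness applies), or a self-message is still in transit whose receipt restores the situation. The paper spells this out for \textsc{CheckStart} (the self \msg{prepare}{} message forces $\phase \neq \bot$, which only \textsc{CleanUp} can reset), uses the $\state = \textsc{unlock}$ disjunct for \textsc{CheckUnlocked}, and declares the remaining cases analogous; your preliminary observation that only the \textsc{Check*} actions can be pre-enabled and your part (2) are consistent with that and unproblematic.

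The gap is in part (1): the configuration invariant you rely on is asserted, not proved, and in the very window you flag it is false as stated. Take an initiator $u$ with no competitors: in \textsc{ReceiveSetLock} it removes port $0$ from $\candidateset(u)$, and if $\candidateset(u)$, $\applyset(u)$, $\holdset(u)$ are then empty, the \textsc{CleanUp} helper inside that same action resets $\phase(u)$ to $\bot$; after $u$ has also consumed its own \msg{ack-lock}{}, no self-message is in transit, $0 \notin \applyset(u) \cup \candidateset(u) \cup \holdset(u)$, and $\state(u) = \textsc{win} \neq \textsc{unlock}$. If a neighbor in $\lockset(u)$ disconnects at that point, \textsc{CheckDone} is pre-enabled, yet neither disjunct of your invariant holds, so your argument exhibits no enabled action of $u$ that would ever run \textsc{CleanUp} and process the disconnection (absent a message from some third party, which the adversary need not provide). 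Hence the ``careful case-by-case trace'' you defer is not routine bookkeeping---it is the substantive content of the lemma, and your proposal does not supply it; as written, the invariant would have to be repaired (or a different argument given) for the \textsc{win} window before the proof is complete. For \textsc{CheckStart}, \textsc{CheckPriorities} and \textsc{CheckWin} your reasoning does go through, since port $0$ remains in $\applyset(u) \cup \holdset(u) \cup \candidateset(u)$ while $\state(u) \in \{\textsc{prepare}, \textsc{compete}\}$, keeping $\phase(u) \neq \bot$; the asymmetry arises exactly at \textsc{CheckDone}, which the paper's own proof also dispatches only with the word ``analogous,'' so you have not recovered the case the paper glosses over.
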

\begin{proof}
    Suppose that \textsc{CheckStart} is pre-enabled for node $u$.
    Then $\state(u) = \textsc{prepare}$ and $u$ must have sent a \msg{prepare}{} message to itself in its execution of \textsc{InitLock}.
    So \textsc{ReceivePrepare} is enabled for $u$, and by Lemma~\ref{lem:enableexecute} it is eventually executed, updating $\phase(u) = \textsc{prepare}$.
    This enables \textsc{CleanUp} for $u$, and it will remain enabled until executed because only \textsc{CleanUp} itself can reset $\phase$ to $\bot$.
    Thus, \textsc{CleanUp} must eventually be executed by the weakly fair adversary, enabling the pre-enabled \textsc{CheckStart}.

    An analogous argument also applies to \textsc{CheckPriorities}, \textsc{CheckWin}, and \textsc{CheckDone}.
    For \textsc{CheckUnlocked}, the condition $\state = \textsc{unlock}$ in the guard of \textsc{CleanUp} ensures that \textsc{CheckUnlocked} is eventually enabled.
\end{proof}

We continue our investigation of possible deadlocks resulting from actions remaining disabled by considering concurrent competitions.
An initiator node $u$ is \textit{competing} if and only if $\state(u) = \textsc{compete}$, i.e., if $u$ has executed \textsc{CheckStart} but has not yet received all \msg{win}{} messages needed to execute \textsc{CheckWin}.
We model dependencies between competing initiators and participants at the start of round $i$ as a directed bipartite graph $\mathcal{D}_i = (\mathcal{I}_i \cup \mathcal{P}_i, E_i)$ where $\mathcal{I}_i = \{u : \state_i(u) = \textsc{compete}\}$ is the set of competing initiators and $\mathcal{P}_i = \{u : \exists v \in \mathcal{I}_i \text{ s.t.\ } u \in \lockset_i(v)\}$ is the set of participants.
We note that some nodes belong to both partitions and consider their initiator and participant versions distinct.
For nodes $u \in \mathcal{I}_i$ and $v \in \mathcal{P}_i \cap \lockset_i(u)$ for which $u = v$ or $(u, v) \in G_i$ (i.e., the edge exists in round $i$), the directed edge $(u, v) \in E_i$ if and only if $u$ has not yet sent a \msg{request-lock}{} message to $v$ in response to the latest \msg{win}{} message from $v$; analogously, $(v, u) \in E_i$ if and only if $v$ has not yet sent a \msg{win}{} message to $u$ in response to the latest \msg{request-lock}{} message from $u$.

\begin{lemma} \label{lem:acyclic}
    For all rounds $i$, $\mathcal{D}_i$ is acyclic.
\end{lemma}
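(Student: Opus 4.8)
The plan is to prove the stronger claim that $\mathcal{D}$ stays acyclic at \emph{every} point of the execution (not only at round boundaries), by induction over the sequence of events -- action executions and adversarial topology changes. Self-messages through port $0$ are realized as in-memory updates, so a node's initiator and participant roles stay synchronized and $\mathcal{D}$ never has a self-loop; I would settle the $u=v$ case this way at the outset and otherwise consider only edges across genuine links.

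Base case: initially no node is in the \textsc{compete} state, so $\mathcal{I}=\emptyset$, $\mathcal{D}$ has no edges, and it is acyclic. For the inductive step I would first dispatch the events that can only \emph{delete} edges, which preserve acyclicity trivially: a disconnection removes the two directed edges across that link; a (re)connection creates a ``fresh'' link carrying no \msg{request-lock}{}/\msg{win}{} history and hence no edge; \textsc{CleanUp} only shrinks the local sets (in particular \lockset) and so only removes edges; and \textsc{InitLock}, \textsc{InitUnlock}, \textsc{ReceivePrepare}, \textsc{ReceiveReady}, \textsc{ReceiveSetLock}, \textsc{ReceiveAckLock}, \textsc{ReceiveRelease}, \textsc{ReceiveAckUnlock}, \textsc{CheckDone}, and \textsc{CheckUnlocked} change neither $\mathcal{I}$, the \lockset's, the graph $G$, nor the status of any in-flight \msg{request-lock}{} or \msg{win}{} message, so they leave $\mathcal{D}$ untouched (\textsc{InitLock} only moves a node into \textsc{prepare}, not \textsc{compete}). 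This leaves \textsc{CheckStart}, \textsc{CheckWin}, \textsc{CheckPriorities}, \textsc{ReceiveRequest}, and \textsc{ReceiveWin}.

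For \textsc{CheckStart}, \textsc{CheckWin}, and \textsc{CheckPriorities} the argument is uniform: every edge the action introduces is incident to the acting node $w$, and immediately afterward $w$ has \emph{no outgoing edge}, so $w$ lies on no directed cycle and hence no cycle through a new edge is created. When $w$ runs \textsc{CheckStart} it enters $\mathcal{I}$ and issues \msg{request-lock}{} to every node of $\lockset(w)$, creating only edges directed into $w$; since $\winset(w)$ has just been (re)set to $\emptyset$, $w$ owes no reply and has no outgoing edge. When $w$ runs \textsc{CheckWin} it has just received a \msg{win}{} from \emph{every} node of $\lockset(w)$ and either re-issues \msg{request-lock}{} to all of them -- thereby responding to every \msg{win}{} it held, so it has no outgoing edge and gains only edges into $w$ -- or it succeeds, leaves $\mathcal{I}$, and loses all of its edges. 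When $w$ runs \textsc{CheckPriorities}, the guard $|\candidateset|=|\compete|$ together with the fact that a held node never sends a \msg{request-lock}{} (it has not received \msg{ready}{}) implies that the requests outstanding at $w$ are precisely those of its current candidates; $w$ answers all of them with \msg{win}{}, so afterward it owes no \msg{win}{} and has no outgoing edge, gaining only edges into $w$.

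The remaining -- and main -- obstacle is \textsc{ReceiveWin} and \textsc{ReceiveRequest}, which respectively create an edge $(u,v)$ and an edge $(v,u)$ and leave the acting node with an \emph{outgoing} edge; here one must instead rule out a pre-existing directed path back into that node. For \textsc{ReceiveRequest} at $v$ on a \msg{request-lock}{} from $u$, a fresh cycle would require a path $u\to\cdots\to v$ that already exists; this is where the preparation phase is essential: $u$ is in \textsc{compete} only because every participant in $\lockset(u)$ -- including $v$ -- has acknowledged $u$ with \msg{ready}{} and therefore had no active competition at that moment, and one leverages this synchronization (together with the alternating bipartite structure of $\mathcal{D}$ and the fact that along any single link at most one of the two directed edges is present) to show that no such path can exist when $u$'s latest \msg{request-lock}{} reaches $v$; \textsc{ReceiveWin} is symmetric. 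Pinning down the exact preparation-phase invariant, and handling messages in transit and the interleaving of \textsc{ReceiveWin}/\textsc{ReceiveRequest} with the \textsc{Check*} actions carefully, is the delicate part; the rest is routine case analysis.
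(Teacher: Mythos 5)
There is a genuine gap, and it sits exactly where you locate ``the main obstacle.'' For \textsc{ReceiveRequest} and \textsc{ReceiveWin} you do not actually give an argument: you assert that the preparation phase provides a synchronization that rules out a pre-existing directed path back into the acting node, but you never state the invariant, let alone verify that it survives the very interleavings you flag (messages in transit, interleaving with the \textsc{Check*} actions). Since in your bookkeeping these are precisely the events that leave the acting node with an \emph{outgoing} edge, this unproven case is the heart of the lemma; the cases you do settle (the three \textsc{Check*} senders becoming sinks, and the deletion-only events) are the routine part. The hint you offer is also not sufficient by itself: that $v$ had no active competition when it sent \msg{ready}{} to $u$ says nothing about $v$'s situation when $u$'s \msg{request-lock}{} later arrives, so a further argument is unavoidable.

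The difficulty is largely an artifact of your convention that dependency edges are created upon \emph{receipt}. In the paper's definition of $\mathcal{D}_i$, the edge $(v,u)$ is present as soon as $u$ has \emph{sent} its latest \msg{request-lock}{} to $v$ without $v$ having replied \msg{win}{} (and symmetrically for $(u,v)$): obligations flip at send time, so \textsc{ReceiveRequest} and \textsc{ReceiveWin} do not change $\mathcal{D}$ at all. With that accounting, the only edge-creating events are \textsc{CheckStart}, \textsc{CheckWin}, and \textsc{CheckPriorities}, in each of which every new edge points into the acting node and that node ends with no outgoing edges---for \textsc{CheckPriorities} the paper argues that any initiator whose \msg{request-lock}{} the participant has received must previously have received \msg{ready}{}, hence was an applicant and is now in \candidateset, so the participant answers all outstanding requests at once---while disconnections only delete edges. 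That is the entire proof; your receive-based version manufactures a hard case that the send-based version never faces, and your sketch of that case is incomplete. If you keep your convention, you must supply and prove the missing invariant, which essentially amounts to re-deriving the send-time argument.
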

\begin{proof}
    Initially, no node has yet called \Lock\ and thus $\mathcal{D}_0$ is empty and trivially acyclic.
    So suppose that $\mathcal{D}_j$ remains acyclic for all rounds $0 \leq j \leq i-1$ and consider the following events that may occur in round $i-1$ to form $\mathcal{D}_i$.
    \begin{itemize}
        \item A node $u$ executes \textsc{CheckStart}.
        Then $(v, u)$ is added to $\mathcal{D}_i$ for each $v \in \lockset_{i-1}(u)$ that $u$ sends \msg{request-lock}{} messages to.
        But $u$ is a sink, so $\mathcal{D}_i$ remains acyclic.
        
        \item A node $u$ executes \textsc{CheckWin}.
        If there exists $(\cdot, \false) \in \winset_{i-1}(u)$, then $(u, v)$ is removed from $\mathcal{D}_i$ and $(v, u)$ is added to $\mathcal{D}_i$ for each $v \in \lockset_{i-1}(u)$ that $u$ once again sends \msg{request-lock}{} messages to.
        As in the first case, this makes $u$ a sink and $\mathcal{D}_i$ remains acyclic.
        Otherwise, if all $(\cdot, b) \in \winset_{i-1}(u)$ have $b = \true$, $u$ has won its competition and sets $\state_i(u) = \textsc{win}$, meaning $u \not\in \mathcal{D}_i$.
        So $\mathcal{D}_i$ remains acyclic in this case as well.
        
        \item A node $u$ executes \textsc{CheckPriorities}.
        Then $(u, v)$ is removed from $\mathcal{D}_i$ and $(v, u)$ is added to $\mathcal{D}_i$ for each $v \in \candidateset_{i-1}(u)$ that $u$ sends \msg{win}{} messages to.
        For $\mathcal{D}_i$ to be acyclic, it suffices to show it does not contain any outgoing edges from $u$; i.e., there are no nodes $w$ such that $u \in \lockset_i(w)$, $w$ has sent $u$ a \msg{request-lock}{} message, but $u$ has not yet sent a \msg{win}{} response to $w$.
        Such a node $w$ could only have sent $u$ a \msg{request-lock}{} message if it had previously received a \msg{ready}{} message from $u$, which in turn could only have been sent by $u$ if $u$ had included $w$ as an applicant in $\applyset(u)$.
        Thus, on receipt of the first \msg{request-lock}{} message from $w$, $u$ would have promoted $w$ to a candidate in $\candidateset(u)$, which is precisely the set that $u$ responds to when executing \textsc{CheckPriorities}.
        So $u$ has no outgoing edges in $\mathcal{D}_i$, as desired.
        
        \item An edge $\{u, v\}$ is disconnected in the TVG $\mathcal{G}$, for $u \in \mathcal{I}_{i-1}$ and $v \in \mathcal{P}_{i-1}$.
        This disconnection is processed by the \textsc{CleanUp} helper function, removing $v$ from $\lockset(u)$ and thus any $(u, v)$ edge from $\mathcal{D}_i$ during the next execution of \textsc{CheckWin} by $u$; an analogous statement holds for edges $(v, u)$ in the next execution of \textsc{CheckPriorities} by $v$.
        As the removal of an edge cannot create a cycle, $\mathcal{D}_i$ remains acyclic.
    \end{itemize}
    
    Therefore, $\mathcal{D}_i$ remains acyclic in all cases, as claimed.
\end{proof}

\begin{lemma} \label{lem:competedeadlock}
    Every competing initiator eventually receives a \msg{win}{} response from its participants; likewise, every participant eventually receives a \msg{request-lock}{} response from its competing initiator(s).
\end{lemma}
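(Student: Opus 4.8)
The plan is to argue by contradiction, showing that a permanent deadlock in the competition handshake would force a static directed cycle into some dependency graph $\mathcal{D}_i$, contradicting Lemma~\ref{lem:acyclic}. First I would reduce both halves of the statement to one claim: no competing initiator is \emph{permanently stuck}, i.e.\ none stays in the \textsc{compete} state forever while executing \textsc{CheckWin} only finitely often. This suffices, since a competing initiator that is not permanently stuck either eventually wins (its last \textsc{CheckWin} having collected all its \msg{win}{} responses) or executes \textsc{CheckWin} infinitely often (each time collecting all its \msg{win}{} responses), giving the first half; and any participant $v$ awaiting a \msg{request-lock}{} from a competing initiator $u$ either already has one in transit (delivered by weak fairness along $\{u,v\}$, or the disconnection is eventually processed and $v$ stops awaiting it) or receives one the next time $u$ executes \textsc{CheckWin}, which $u$ does since it is not permanently stuck, giving the second half. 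So suppose for contradiction that some competing initiator $u_0$ is permanently stuck.

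I would then trace the chain of dependencies, verifying that each link is, from some round on, a \emph{permanent} edge of $\mathcal{D}$. Since $\lockset(u_0)$ is non-increasing after $u_0$ executes \textsc{InitLock} (only \textsc{CleanUp} alters it, by deletion), it stabilizes to a set $L$; and because $u_0$ is stuck, its outstanding \msg{request-lock}{} messages are frozen and $\winset(u_0)$ never reaches $L$. The set $S_i$ of ports $v\in L$ that have not yet sent $u_0$ the \msg{win}{} responding to this frozen \msg{request-lock}{} --- equivalently, those with $(v,u_0)\in E_i$ --- is non-increasing after $u_0$'s last competition restart (a port leaves $S_i$ only when its participant sends the \msg{win}{} or the edge disconnects, and neither situation reverses while $u_0$ is stuck) and cannot become empty, for otherwise $\winset(u_0)$ would catch up to $L$ and \textsc{CheckWin} would be enabled and then fire by Lemmas~\ref{lem:enableexecute} and~\ref{lem:preenabled}. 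Hence some port $v_0\in L$ lies in $S_i$ permanently: by definition of $E_i$ the edge $\{u_0,v_0\}$ is then persistent, and from some round on $v_0$ never sends $u_0$ the required \msg{win}{}. But $v_0$ would send that \msg{win}{} by executing \textsc{CheckPriorities} once $u_0$'s latest \msg{request-lock}{} is recorded in $\compete(v_0)$ (which happens eventually, the edge being persistent), at which point $u_0\in\candidateset(v_0)$ permanently since $u_0$ neither wins nor disconnects; so $v_0$ executes \textsc{CheckPriorities} only finitely often. A symmetric monotonicity argument --- the set of candidates in $\candidateset(v_0)$ lacking a recorded priority in $\compete(v_0)$ only shrinks once $v_0$ stops running \textsc{CheckPriorities}, as losers re-submit their \msg{request-lock}{} messages, and cannot become empty without satisfying the \textsc{CheckPriorities} guard --- yields a candidate $u_1$ that permanently owes $v_0$ a \msg{request-lock}{}, so $(u_1,v_0)\in E_i$ from some round on; and $u_1$, owing a re-submission it issues only in \textsc{CheckWin}, is itself a permanently stuck competing initiator. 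Iterating produces an infinite alternating sequence $u_0,v_0,u_1,v_1,u_2,\ldots$ of (initiator, participant, initiator, $\ldots$) nodes in which every consecutive pair is joined by an edge of $\mathcal{D}$ present at all sufficiently late rounds.

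Finally, because $V$ is finite and the sequence alternates type with period two, some node recurs, giving a closed sub-walk $y_a,y_{a+1},\ldots,y_b=y_a$ with $b-a$ even and with all edges pointing ``backward'', i.e.\ the edge $y_{m+1}\to y_m$ belongs to $\mathcal{D}$. Choosing a round $i$ past all of the finitely many thresholds beyond which these edges are permanent, and noting that every initiator on the walk is still in the \textsc{compete} state (hence in $\mathcal{I}_i$) while every participant $v_k$ on the walk has $v_k\in\lockset_i(u_k)$ and $(u_k,v_k)\in G_i$ (hence lies in $\mathcal{P}_i$ and, with $u_k\in\mathcal{I}_i$, realizes the edge $(v_k,u_k)\in E_i$), one obtains a directed cycle in $\mathcal{D}_i$, contradicting Lemma~\ref{lem:acyclic}. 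I expect the main obstacle to be exactly this middle step: at every link one must confirm that ``stuck forever'' propagates through a \emph{genuinely permanent} edge, ruling out that the awaited message is merely delayed, that the relevant neighbor eventually disconnects (a case \textsc{CleanUp} absorbs), or that the relevant candidate eventually wins and leaves the competition --- since only permanent edges can be aligned at a single common round to close the cycle.
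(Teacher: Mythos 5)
Your proposal is correct and follows essentially the same route as the paper's proof: assume a competing initiator waits forever, propagate the blockage via Lemmas~\ref{lem:enableexecute} and~\ref{lem:preenabled} through alternating initiator/participant dependencies, and use finiteness of $V$ to extract a directed cycle contradicting Lemma~\ref{lem:acyclic}. Your explicit step of fixing a round $i$ beyond which all edges of the extracted closed walk are permanent (so the cycle lives in a single $\mathcal{D}_i$) is a welcome tightening of a point the paper's proof leaves implicit.
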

\begin{proof}
    Suppose to the contrary that there exists a competing initiator $u$ that waits indefinitely for a \msg{win}{} response from some participant $v$.
    Then the edge $\{u, v\}$ must never be disconnected in the TVG $\mathcal{G}$ and the directed edge $(v, u)$ must remain indefinitely in $\mathcal{D}$.
    By Lemmas~\ref{lem:enableexecute} and~\ref{lem:preenabled}, $v$ can only be prohibited from sending the requisite \msg{win}{} message if \textsc{CheckPriorities} remains disabled for $v$ indefinitely.
    This, in turn, is only possible if $v$ waits indefinitely for a \msg{request-lock}{} response from some competing initiator $w \neq u$.
    This implies that $\{v, w\}$ is never disconnected in $\mathcal{G}$ and the directed edge $(w, v)$ remains indefinitely in $\mathcal{D}$.
    As before, Lemmas~\ref{lem:enableexecute} and~\ref{lem:preenabled} can be applied iteratively to show that each node must be waiting on another.
    But since the set of nodes $V$ is finite, some node must eventually be revisited, establishing a directed cycle in $\mathcal{D}$ and contradicting Lemma~\ref{lem:acyclic}.
\end{proof}

Lemma~\ref{lem:competedeadlock} directly implies the following corollary.

\begin{corollary} \label{cor:finitetrials}
    Every competition trial of a competing initiator eventually completes.
\end{corollary}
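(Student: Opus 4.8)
The plan is to deduce the statement directly from Lemma~\ref{lem:competedeadlock}, using Lemmas~\ref{lem:enableexecute} and~\ref{lem:preenabled} to turn ``eventually enabled'' into ``eventually executed.'' First I would fix terminology: a competing initiator $u$ begins a \emph{competition trial} whenever it executes \textsc{CheckStart}, or the first branch of \textsc{CheckWin}, resetting $\winset(u) \gets \emptyset$ and sending a fresh \msg{request-lock}{$p$} on every port of $\lockset(u)$; the trial \emph{completes} when $u$ next executes \textsc{CheckWin}. So it suffices to show that, within any trial, \textsc{CheckWin} eventually becomes enabled or pre-enabled for $u$, i.e.\ that eventually $|\winset(u)| = |\lockset(u)|$.

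I would then isolate two monotonicity facts for the duration of a single trial. First, $\lockset(u)$ is never enlarged while $u$ is competing: only \textsc{InitLock} assigns $\lockset(u)$ and its guard requires $\state(u) = \bot$, while the only other modification, in \textsc{CleanUp}, just deletes disconnected ports. Second, the ports occurring in $\winset(u)$ always form a subset of those in $\lockset(u)$: \textsc{ReceiveWin} adds a port $\lbl$ only upon receiving a \msg{win}{} on $\lbl$, which can be in transit only because $u$ previously sent a \msg{request-lock}{} on $\lbl$ (so $\lbl \in \lockset(u)$), and \textsc{CleanUp} removes the same disconnected ports from $\winset(u)$ and $\lockset(u)$ in lockstep. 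Together these let me track the deficit $\lockset(u) \setminus \{\lbl : (\lbl, \cdot) \in \winset(u)\}$, which never grows during the trial.

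The core step shows the deficit eventually empties. Fix a port $\lbl \in \lockset(u)$ present at the trial's start, with participant $v$ on $\lbl$. If the edge $\{u, v\}$ is ever disconnected in $\mathcal{G}$, the next \textsc{CleanUp} deletes $\lbl$ from both $\lockset(u)$ and $\winset(u)$, shrinking the deficit. Otherwise $\{u, v\}$ persists, so the directed edge $(v, u)$ remains in $\mathcal{D}$ until $v$ answers $u$'s latest \msg{request-lock}{}; by Lemma~\ref{lem:competedeadlock}, $u$ eventually receives the corresponding \msg{win}{} response, and by Lemma~\ref{lem:enableexecute} \textsc{ReceiveWin} is eventually executed, adding $\lbl$ to $\winset(u)$ and again shrinking the deficit. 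Since $\lockset(u)$ is finite and never grows, after finitely many such events the deficit is empty, so $|\winset(u)| = |\lockset(u)|$ and \textsc{CheckWin} is enabled or pre-enabled; Lemmas~\ref{lem:enableexecute} and~\ref{lem:preenabled} then guarantee it is executed, completing the trial.

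The main obstacle I anticipate is the bookkeeping behind the second monotonicity fact: arguing that the \msg{win}{} messages accumulating in $\winset(u)$ genuinely belong to the current trial and not an earlier one. For the liveness claim this is harmless—stale messages could only enable \textsc{CheckWin} sooner—but to make the deficit count mean what it should I would lean on the tight \msg{request-lock}{}/\msg{win}{} handshake along each competing edge (at most one pair outstanding, as already reflected in the ``latest message'' wording defining $\mathcal{D}$) and on \textsc{CleanUp} clearing all per-port state on disconnection. If this requires more than a remark, I would factor it out as a short invariant lemma on the contents of $\winset$ and of the messages in transit on a competing edge, proved by induction over action executions, and invoke it here.
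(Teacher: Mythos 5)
Your proposal is correct and follows the same route as the paper, which simply observes that Lemma~\ref{lem:competedeadlock} (every competing initiator eventually receives its \msg{win}{} responses, with disconnections handled by \textsc{CleanUp}) together with Lemmas~\ref{lem:enableexecute} and~\ref{lem:preenabled} directly yields the corollary. Your write-up is a more detailed elaboration of exactly that step (the deficit/monotonicity bookkeeping), but it is not a different argument.
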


To demonstrate that our algorithm satisfies lockout freedom, it remains to show that every competing initiator $u$ eventually wins a competition trial by receiving all \msg{win}{\true} responses from $\lockset(u)$.
We first address the situation in which a competition trial of $u$ is \textit{open}, meaning none of the nodes $v \in \lockset(u)$ are locked during the trial.

\begin{lemma} \label{lem:opentrials}
    If $K = \Theta(1)$, then an initiator that competes in an open competition trial infinitely often will eventually win a competition, with probability 1.
\end{lemma}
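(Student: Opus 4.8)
The plan is to argue that in each open competition trial, the initiator $u$ wins with some probability bounded below by a positive constant depending only on $K$ and $\Delta$, so that over infinitely many such trials the probability of never winning is zero. Fix an open trial of $u$ and let $C$ be the set of candidates competing for some node $v \in \lockset(u)$ simultaneously with $u$ in that trial. By Corollary~\ref{cor:finitetrials} the trial completes; when it does, each participant $v \in \lockset(u)$ runs \textsc{CheckPriorities} and sends $u$ a \msg{win}{\true} message precisely when $u$'s priority is the unique maximum among the priorities of the candidates $v$ is deciding among (and $v$ is unlocked, which holds since the trial is open). Thus $u$ wins the whole trial exactly when $u$'s chosen priority $p$ strictly exceeds every other competing candidate's priority at every participant $v \in \lockset(u)$.

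The key step is to lower-bound the probability of that event. Each competitor (including $u$) picks its priority uniformly and independently from $\{0,\dots,K-1\}$. The number of other candidates $u$ must beat is at most $\sum_{v \in \lockset(u)} (|\candidateset(v)| - 1)$, which is bounded by a function of $\Delta$: $u$ has at most $\Delta+1$ participants in $\lockset(u)$, and each participant coordinates at most $\Delta+1$ candidates, so there are at most $(\Delta+1)^2$ relevant competitors. Conditioning on $u$ drawing $p = K-1$ (probability $1/K$), $u$ wins unless some competitor also drew $K-1$; by a union bound this failure probability is at most $(\Delta+1)^2 / K$. Hence $u$ wins with probability at least $\frac{1}{K}\left(1 - \frac{(\Delta+1)^2}{K}\right)$, which is a positive constant once $K = \Theta(1)$ is chosen large enough relative to $\Delta$ (e.g.\ $K > (\Delta+1)^2$). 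The priorities in distinct trials are drawn afresh and independently, so the events ``$u$ loses trial $t$'' across the infinitely many open trials are independent (or at least each has probability bounded by a constant $q < 1$ given the past), and therefore $\Pr{\text{$u$ loses every open trial}} \le \prod_t q = 0$, i.e.\ $u$ wins some open trial with probability $1$.

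I expect the main obstacle to be making the independence/conditioning rigorous: the set of competitors $u$ faces in a given trial, and their number, is itself determined by the adversarial schedule and by earlier random choices, so the ``at most $q<1$'' bound must be stated conditionally on the entire history up to the start of the trial, and one then applies a Borel–Cantelli-style argument (or the second Borel–Cantelli / Lévy's extension for adapted events) rather than naive independence. A secondary subtlety is confirming that during an open trial no participant $v \in \lockset(u)$ is locked \emph{at the moment it runs} \textsc{CheckPriorities} — this is exactly the definition of the trial being open, but one should note that $v$ being unlocked throughout the trial is what guarantees the ``$\lock = \bot$'' branch of \textsc{CheckPriorities} is taken, so the unique-highest-priority candidate genuinely receives \msg{win}{\true}. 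The counting bound $(\Delta+1)^2$ is routine and I would not belabor it; the crux is packaging the per-trial constant lower bound into an almost-sure eventual win.
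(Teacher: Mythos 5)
Your overall strategy---a per-trial win probability bounded below by a positive quantity, then the observation that over infinitely many open trials with fresh random priorities the probability of never winning is $\lim_{n\to\infty}(1-p)^n = 0$---is exactly the paper's, and your side remarks (conditioning on the history since the competitor set is schedule-dependent; openness guaranteeing the $\lock = \bot$ branch of \textsc{CheckPriorities} so the unique-highest candidate really gets \msg{win}{\true}) are sound. The gap is in the quantitative step. Your per-trial bound $\frac{1}{K}\bigl(1 - \frac{(\Delta+1)^2}{K}\bigr)$, obtained by conditioning on $p(u)=K-1$ and union-bounding over competitors, is positive only when $K > (\Delta+1)^2$, i.e., only if $K$ grows with $\Delta$. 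But the lemma's hypothesis is $K = \Theta(1)$ with $\Delta$ an arbitrary system parameter, and the paper relies on $K$ being a constant independent of $\Delta$ to keep \msg{request-lock}{} messages at $\Theta(1)$ bits; its footnote explicitly declines the choice $K > \Delta^2$ because that would force \bigo{\log\Delta}-bit messages. So as written your argument proves only the weaker statement in which $K$ is tied to $\Delta$, not the lemma.

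The repair is small: drop the union bound and use independence of the uniform draws directly. With $c \le (\Delta+1)^2$ relevant competitors, the event ``$u$ draws $K-1$ and every competitor draws at most $K-2$'' has probability exactly $\frac{1}{K}\left(1-\frac{1}{K}\right)^{c} \ge \frac{1}{K}\left(1-\frac{1}{K}\right)^{(\Delta+1)^2} > 0$ for every fixed $K \ge 2$, and this event implies $u$ has the unique highest priority at every participant. A $\Delta$-dependent but strictly positive per-trial bound is all that is needed, since the limit argument only requires $p>0$, not $p$ bounded away from zero uniformly in $\Delta$; the paper's own (more involved) computation likewise lower-bounds the unique-highest probability by $(1-1/K)^{2\Delta^2}/(2\Delta^2)$, positive for any constant $K \ge 2$. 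Finally, your Borel--Cantelli concern is resolved by noting that this bound holds conditionally on the entire history up to the start of the trial, because the priorities in that trial are fresh uniform draws independent of everything earlier, so the probability of losing the first $n$ open trials is at most $(1-p)^n$ regardless of the adversarial schedule.
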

\begin{proof}
    Consider any competing initiator $u$ and any open competition trial of $u$.
    By the start of its second competition trial, $u \in \candidateset(v)$ for all $v \in \lockset(u)$, implying that $\phase(v) = \textsc{compete}$ and no other nodes will be added to $\candidateset(v) \cup \applyset(v)$ while $u$ is still competing for $v$.
    Since $|\lockset(u) \setminus \{u\}| \leq \Delta$ and $|\candidateset(v) \cup \applyset(v) \setminus \{u\}| \leq \Delta$ for each $v \in \lockset(u) \setminus \{u\}$, node $u$ can be competing against $c \leq \Delta^2$ other nodes.
    Every node chooses its priority uniformly at random from $\{0, \ldots, K - 1\}$, so it follows from symmetry that the probability $u$ has the highest priority in a given trial is at least $1/\Delta^2$.
    In general,
    \begin{align*}
        \hspace{-5mm}\frac{\Pr{p(u) \text{ highest} \mid p(u) \text{ unique}}}{\Pr{p(u) \text{ highest}}} &= \frac{\sum_{p=0}^{K-1} \Pr{p(u) = p \, \wedge \, \forall v \neq u : p(v) \leq p(u) \mid p(u) \text{ unique}}}{\sum_{p=0}^{K-1} \Pr{p(u) = p \, \wedge \, \forall v \neq u : p(v) \leq p(u)}} \\
        &= \frac{\sum_{p=0}^{K-1} \frac{1}{K}\left(\frac{p}{K-1}\right)^c}{\sum_{p=0}^{K-1} \frac{1}{K}\left(\frac{p+1}{K}\right)^c}
        \geq \frac{\sum_{p=0}^{K-1} p^c}{\sum_{p=1}^K p^c}
        \geq \frac{(K - 1)^c}{2K^c}
        \geq \frac{(1 - 1/K)^{\Delta^2}}{2}
    \end{align*}
    Furthermore, the probability that $u$ has a unique priority is $(1 - 1/K)^c \geq (1 - 1/K)^{\Delta^2}$.
    Thus, the probability that $u$ has the unique highest priority in a given open trial is
    \begin{align*}
        \Pr{p(u) \text{ highest} \wedge p(u) \text{ unique}} &= \Pr{p(u) \text{ highest} \mid p(u) \text{ unique}} \cdot \Pr{p(u) \text{ unique}} \\
        &\geq \frac{(1 - 1/K)^{\Delta^2}}{2\Delta^2} \cdot (1 - 1/K)^{\Delta^2} = \frac{(1 - 1/K)^{2\Delta^2}}{2\Delta^2} > 0.
    \end{align*}
    Since this probability is strictly positive, the probability that $u$ never has the unique highest priority in an infinite sequence of open competition trials is
    \[\lim_{n \to \infty}(1 - \Pr{p(u) \text{ highest} \wedge p(u) \text{ unique}})^n
    \leq \lim_{n \to \infty} \left(1 - \frac{(1 - 1/K)^{2\Delta^2}}{2\Delta^2}\right)^n = 0.\]
    Therefore, with probability 1 there must eventually be an open competition trial in which $u$ has the unique highest priority.
    Because the trial is open, all $v \in L(u)$ have $\lock(v) = \bot$ and thus will send \msg{win}{\true} responses to $u$.\footnote{This proof can be easily extended to show that if $K > \Delta^2$, $u$ will win a competition within $\bigo{\Delta^2}$ open competition trials, in expectation.
    We chose to avoid this increase in message size requirements from $\Theta(1)$ to $\bigo{\log\Delta}$ since time complexity is not a focus of this work.}
\end{proof}

We next show that a competing initiator competes in an open trial infinitely often.
Recall from Section~\ref{subsec:problem} that a \Lock\ operation by node $u$ succeeds once $u$ obtains locks for its persistent neighborhood, and once obtained, these locks are eventually released via \Unlock.

\begin{lemma} \label{lem:success}
    Every competing initiator eventually wins a competition trial with probability~1.
\end{lemma}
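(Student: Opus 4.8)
The plan is to combine Lemma~\ref{lem:opentrials} with a potential-argument showing that a competing initiator cannot be blocked by locked neighbors forever. By Corollary~\ref{cor:finitetrials}, every competition trial of a competing initiator $u$ eventually completes, so $u$ either wins (and we are done) or runs infinitely many trials. By Lemma~\ref{lem:opentrials}, it suffices to show that infinitely many of these trials are \emph{open}, i.e., that infinitely often all $v \in \lockset(u)$ have $\lock(v) = \bot$ during the trial. Suppose not: then from some trial onward, every trial of $u$ is \emph{closed}, meaning at least one $v \in \lockset(u)$ is locked (by some node $\neq u$, since $u$ has not won) at the time $v$ processes $u$'s \msg{request-lock}{} in \textsc{CheckPriorities}.

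\textbf{Key steps.} First I would fix attention on a single such blocking neighbor and argue that whoever locked $v$ must eventually release it. A node $w \neq u$ with $\lock(v) = \lbl_v(w)$ reached the \textsc{locked} state, so by the problem's standing assumption (restated just before this lemma) it eventually calls \Unlock; once it does, \textsc{ReceiveRelease} fires at $v$ (by Lemmas~\ref{lem:enableexecute}--\ref{lem:preenabled}, using that the \msg{release-lock}{} message is processed unless $\{v,w\}$ disconnects, in which case \textsc{CleanUp} clears $\lock(v)$ anyway), so $\lock(v)$ becomes $\bot$. Second, and this is the crux, I must rule out an adversarial schedule in which, each time one blocker releases, a \emph{new} winner locks $v$ (or some other node of $\lockset(u)$) before $u$'s next trial can exploit the gap. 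The argument is that $v$ runs its competitions one at a time among its candidates, and $u \in \candidateset(v)$ stays there across trials (shown in the proof of Lemma~\ref{lem:opentrials}); the set $\candidateset(v) \cup \applyset(v)$ has size at most $\Delta$ and admits no new members while $u$ is competing for $v$, so only finitely many \emph{distinct} other nodes can ever win the lock on $v$. Moreover each such winner needs a \emph{unique highest priority}, and when $u$ competes it has a strictly positive (bounded below) probability of tying or exceeding that priority, forcing the trial to have no unique winner --- hence $v$ stays unlocked. Iterating the finite-$\lockset(u)$ bound over all $v \in \lockset(u)$, only finitely many lock-acquire events on nodes of $\lockset(u)$ can occur before $u$ wins; after the last one, all subsequent trials of $u$ are open, and Lemma~\ref{lem:opentrials} finishes the argument with probability~1.

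\textbf{Main obstacle.} The delicate point is the interleaving: showing that the adversary cannot indefinitely shuttle the lock among different nodes of $\lockset(u)$ so that \emph{some} node is always locked even though no single node is locked forever. I expect to handle this with a charging argument --- each lock-acquire on $v$ by a node $w$ is charged to the pair $(v,w)$, there are at most $|\lockset(u)| \cdot \Delta = \bigo{\Delta^2}$ such pairs available while $u$ competes, and each pair is used at most once because once $w$ unlocks $v$ it leaves $\candidateset(v)$ permanently (its \msg{set-lock}{} handling removes it, and it cannot re-enter while $u$ is present). Thus after $\bigo{\Delta^2}$ closed trials of $u$, no node of $\lockset(u)$ can be locked by anyone but $u$, so every later trial of $u$ is open; combined with Corollary~\ref{cor:finitetrials} (infinitely many trials) and Lemma~\ref{lem:opentrials} (open trials win with probability~1), the claim follows.
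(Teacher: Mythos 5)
Your proposal is correct and follows essentially the same route as the paper: bound the set of nodes that can ever lock each $v \in \lockset(u)$ (no new entrants to $\candidateset(v) \cup \applyset(v)$ while $u$ competes, winners removed at \textsc{ReceiveSetLock} and unable to re-enter), invoke the eventual-release assumption, conclude only finitely many blocking lock acquisitions can occur, and finish with Lemma~\ref{lem:opentrials} on the resulting infinitely many open trials. The probabilistic aside about $u$ tying the winner's priority is unnecessary (the finiteness argument is deterministic), and your bound should be phrased in terms of lock-acquire events rather than closed trials (one holder can cause many closed trials before releasing), but neither affects the substance.
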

\begin{proof}
    Suppose to the contrary that a competing initiator $u$ competes in an infinite number of competition trials.
    Only a finite number of these trials can be open, since $u$ would eventually win one of an infinite number of open trials with probability 1 by Lemma~\ref{lem:opentrials}.
    So an infinite number of trials of $u$ must be closed; i.e., there are an infinite number of trials in which at least one $v \in \lockset(u)$ has $\lock(v) \neq \bot$.
    Since $|\lockset(u) \setminus \{u\}| \leq \Delta$, there must be a node $v \in \lockset(u)$ that is locked infinitely often.
    But by the start of its second competition trial, $u \in \candidateset(v)$ and no other nodes will be added to $\candidateset(v) \cup \applyset(v)$ while $u$ is still competing for $v$.
    Thus, only the nodes in $\candidateset(v) \cup \applyset(v)$ and the node that had already locked $v$ when $u$ was added to $\candidateset(v)$ could possibly lock $v$.
    But whenever $v$ sets its locks in \textsc{ReceiveSetLock}, it removes the locking node from $\candidateset(v)$.
    Moreover, any node that obtains locks must eventually release them, by supposition.
    So the set of nodes that could lock $v$ is monotonically decreasing and thus nodes in $\candidateset(v) \cup \applyset(v) \setminus \{u\}$ cannot lock $v$ an infinite number of times, a contradiction.
\end{proof}

For an initiator $u$ to benefit from eventual victory ensured by Lemma~\ref{lem:success}, it must become competing in the first place; i.e., it must advance to $\state(u) = \textsc{compete}$.

\begin{lemma} \label{lem:prepare}
    Every initiator eventually becomes competing.
\end{lemma}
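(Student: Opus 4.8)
The goal is to show that every initiator $u$ that has called \Lock\ eventually reaches $\state(u) = \textsc{compete}$, i.e., it eventually executes \textsc{CheckStart}. Tracing the preparation phase, after \textsc{InitLock} we have $\state(u) = \textsc{prepare}$ and $\lockset(u) = N[u]$ at that moment, and $u$ has sent a \msg{prepare}{} to every node of $\lockset(u)$ (including itself via port $0$). The action \textsc{CheckStart} becomes enabled precisely when $\replyset(u) = \lockset(u)$, so I need to show that $u$ eventually collects a \msg{ready}{} response from every node that remains persistently connected to it — the \textsc{CleanUp} function will have pruned the rest from both $\lockset(u)$ and $\replyset(u)$, keeping the invariant $\replyset(u) \subseteq \lockset(u)$ and making \textsc{CheckStart} pre-enabled once all persistent replies arrive. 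Then Lemmas~\ref{lem:enableexecute} and~\ref{lem:preenabled} finish the job.

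So the crux is: does each persistent participant $v \in \lockset(u)$ eventually send a \msg{ready}{} back to $u$? On receiving $u$'s \msg{prepare}{}, $v$ executes \textsc{ReceivePrepare}. If $\phase(v) \neq \textsc{compete}$, it immediately adds $u$ to $\applyset(v)$ and sends \msg{ready}{} — done. The problematic case is $\phase(v) = \textsc{compete}$: then $v$ puts $u$ on hold in $\holdset(v)$ and defers the \msg{ready}{}. The plan is to argue that $v$ cannot stay in $\phase = \textsc{compete}$ forever. The \phase\ variable is reset toward $\bot$ (and held nodes are released with \msg{ready}{}) inside \textsc{CleanUp} exactly when $\candidateset(v) = \emptyset$. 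By Corollary~\ref{cor:finitetrials} every competition trial at $v$ eventually completes, and every winning candidate is removed from $\candidateset(v)$ in \textsc{ReceiveSetLock} while, by Lemma~\ref{lem:success} together with the release-after-lock assumption, every initiator currently contending for $v$ eventually wins and then unlocks — so $\candidateset(v)$ is eventually emptied. A subtlety: while $u$ sits in $\holdset(v)$, no \emph{new} candidates are admitted to $\candidateset(v)$ (held nodes are promoted only to $\applyset(v)$, and only once $\candidateset(v)$ is empty), so the set of initiators that can keep $v$'s competition phase alive is the finite, non-growing set present when $u$ arrived; this is essentially the same monotone-decrease argument used in Lemma~\ref{lem:success}. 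Hence eventually $\candidateset(v) = \emptyset$, some execution of \textsc{CleanUp} at $v$ fires (it is enabled since $\phase(v) \neq \bot$, and stays enabled until executed, so the weakly fair adversary runs it), $v$ sends \msg{ready}{} to $u$ and moves $u$ to $\applyset(v)$.

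Once all persistent participants have replied, $\replyset(u) = \lockset(u)$ holds up to possible pending disconnections, so \textsc{CheckStart} is pre-enabled; Lemma~\ref{lem:preenabled} makes it enabled and Lemma~\ref{lem:enableexecute} makes it execute, at which point $\state(u) = \textsc{compete}$. The main obstacle — and the step needing the most care — is ruling out an indefinite deferral of $u$ inside $\holdset(v)$: it requires chaining Corollary~\ref{cor:finitetrials}, Lemma~\ref{lem:success}, the "locks are eventually released" assumption, and the structural fact that held initiators do not feed new candidates into $v$'s current competition, so that the pool of nodes keeping $\phase(v) = \textsc{compete}$ is finite and strictly shrinks. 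I would also note the harmless edge case $u = v$: a node always receives its own \msg{prepare}{} via port $0$, and since $\phase$ was just set consistently in \textsc{InitLock}'s \textsc{CleanUp}, the self-reply is immediate.
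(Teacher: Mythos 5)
Your proof is correct and follows essentially the same route as the paper's: the paper argues by contradiction that if $u$ stayed on hold at some $v$ forever, some member of the finite, non-growing set $\candidateset(v) \cup \applyset(v)$ would have to compete in infinitely many trials, contradicting Lemma~\ref{lem:success}, and then invokes weak fairness of \textsc{CleanUp} together with Lemmas~\ref{lem:enableexecute} and~\ref{lem:preenabled}, exactly as you do. One minor wording caveat: while $u$ is on hold, existing applicants in $\applyset(v)$ can still be promoted to $\candidateset(v)$ upon their \msg{request-lock}{} messages, so ``no new candidates are admitted'' should be ``no new nodes enter $\candidateset(v) \cup \applyset(v)$''---but your actual argument (a finite, non-growing pool that shrinks via Lemma~\ref{lem:success} and the eventual-release assumption) is the correct one and matches the paper.
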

\begin{proof}
    Suppose to the contrary that an initiator $u$ never becomes competing, i.e., it never executes \textsc{CheckStart}.
    By Lemmas~\ref{lem:enableexecute} and~\ref{lem:preenabled}, this is only possible if \textsc{CheckStart} remains disabled indefinitely.
    To be an initiator at all, $u$ must have executed \textsc{InitLock}, set $\state(u) = \textsc{prepare}$, and sent \msg{prepare}{} messages to all nodes $v \in \lockset(u)$.
    So $u$ must be waiting for a \msg{ready}{} response from at least one $v \in \lockset(u)$ that remains connected to $u$ indefinitely.
    
    By Lemma~\ref{lem:enableexecute}, such a node $v$ must eventually execute \textsc{ReceivePrepare}.
    During this execution, it must be the case that $\phase(v) = \textsc{compete}$ and $v$ adds $u$ to $\holdset(v)$; otherwise, $v$ would have added $u$ to $\applyset(v)$ and replied to $u$ with a \msg{ready}{} message, a contradiction.
    Only the \textsc{CleanUp} helper function can reset $\phase(v)$ to $\bot$, but it only does so when $\candidateset(v) \cup \applyset(v) \cup \holdset(v) = \emptyset$ which is not the case since $u \in \holdset(v)$.
    So the \textsc{CleanUp} action is continuously enabled for $v$ and is eventually executed by the weakly fair adversary.
    During this execution, it must be the case that $\candidateset(v) \neq \emptyset$; otherwise, $v$ would have sent \msg{ready}{} messages to all initiators on hold at $v$, including $u$, a contradiction.
    But for this situation to occur indefinitely, there must exist some competitor in the finite set $\candidateset(v) \cup \applyset(v)$ that competes in an infinite number of trials, a contradiction of Lemma~\ref{lem:success}.
\end{proof}

Combining Corollary~\ref{cor:finitetrials} with Lemmas~\ref{lem:success} and~\ref{lem:prepare} implies the following corollary.

\begin{corollary}
    The local mutual exclusion algorithm satisfies lockout freedom.
\end{corollary}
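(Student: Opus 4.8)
The plan is to fix an arbitrary initiator $u$ that issues a lock request --- i.e.\ executes \textsc{InitLock} in some round $i$ with $\state(u) = \bot$, thereby setting $\state(u) = \textsc{prepare}$ and $\lockset(u) = N_i[u]$ --- and to follow $u$ through its lock states, invoking the established lemmas to guarantee forward progress at each stage. By Lemma~\ref{lem:prepare}, $u$ eventually becomes competing (i.e.\ executes \textsc{CheckStart}). By Corollary~\ref{cor:finitetrials}, every competition trial of $u$ completes, so $u$ proceeds through a sequence of trials, and by Lemma~\ref{lem:success} one of them is won with probability $1$: $u$ executes \textsc{CheckWin} with every outcome in $\winset(u)$ equal to \true, advances to $\state(u) = \textsc{win}$, and sends \msg{set-lock}{} to every node in $\lockset(u)$ (including itself via port $0$).

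Next I would push $u$ from \textsc{win} to \textsc{locked}. Each $v \in \lockset(u)$ that remains connected to $u$ eventually processes \msg{set-lock}{} via \textsc{ReceiveSetLock} (Lemma~\ref{lem:enableexecute}), setting $\lock(v) = \ell_v(u)$ and replying \msg{ack-lock}{}; any $v$ that disconnects from $u$ is removed from $\lockset(u)$ by \textsc{CleanUp}, so $u$ never waits for it. Hence $\replyset(u)$ grows --- and is kept $\subseteq \lockset(u)$ by \textsc{CleanUp} --- until $\replyset(u) = \lockset(u)$, making \textsc{CheckDone} pre-enabled and therefore eventually executed by Lemmas~\ref{lem:preenabled} and~\ref{lem:enableexecute}. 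Let $j$ be the round in which $u$ executes \textsc{CheckDone}; note $j > i$ and $\state_j(u) = \textsc{locked}$.

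It then remains to check that $\lockset_j(u)$ is exactly the persistent neighborhood demanded by the problem definition and that $\mathcal{L}_j(u) = \lockset_j(u)$. The only actions modifying $\lockset(u)$ are \textsc{InitLock}, which sets it to $N_i[u]$, and \textsc{CleanUp}, which only deletes ports listed in $\disconnectset$; in particular nothing is ever added to $\lockset(u)$ after round $i$. Thus $\lockset_j(u) = \{u\} \cup \{v \in N_i(u) : \forall t \in [i,j],\ \{u,v\} \in G_t\}$. Moreover each $v \in \lockset_j(u)$ has $\lock(v) = \ell_v(u)$ at round $j$, having set it during the \msg{set-lock}{}/\msg{ack-lock}{} exchange above and not released it --- a participant's \lock\ is reset only via \textsc{ReceiveRelease} (requiring an \Unlock\ that $u$ has not yet called) or a \textsc{CleanUp} accompanying a disconnection (which would also have deleted that participant from $\lockset(u)$) --- while $\lock_j(u) = 0$ from $u$'s self-message; conversely, no node outside $\lockset_j(u)$ points its \lock\ at $u$. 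Hence $\mathcal{L}_j(u) = \lockset_j(u)$ equals the target set, so round $j > i$ witnesses the success of $u$'s request. Since this holds with probability $1$ for every issued lock request, lockout freedom follows.

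The step I expect to be the main obstacle is this last one: carefully reconciling the problem definition's target set --- which itself depends on the success round --- with the algorithm's dynamically shrinking $\lockset(u)$, and ruling out both premature release of a participant's lock and any extraneous node pointing its \lock\ at $u$. By contrast, the forward-progress arguments in the first two steps are near-immediate consequences of Lemma~\ref{lem:prepare}, Corollary~\ref{cor:finitetrials}, Lemma~\ref{lem:success}, Lemma~\ref{lem:enableexecute}, and Lemma~\ref{lem:preenabled}.
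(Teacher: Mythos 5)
Your proposal is correct and follows essentially the same route as the paper, which derives this corollary directly by combining Lemma~\ref{lem:prepare}, Corollary~\ref{cor:finitetrials}, and Lemma~\ref{lem:success}. The extra bookkeeping you supply---pushing $u$ from \textsc{win} to \textsc{locked} via Lemmas~\ref{lem:enableexecute} and~\ref{lem:preenabled} and matching the shrinking $\lockset(u)$ to the persistent neighborhood---is a sound filling-in of details the paper leaves implicit, not a different argument.
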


Recall from Section~\ref{subsec:problem} that the mutual exclusion property is trivially satisfied by our construction of the lock sets.
Thus, we conclude the proof of Theorem~\ref{thm:locking} with the following result regarding the algorithm's memory and message size requirements.

\begin{lemma}
    The algorithm requires $\bigo{\Delta}$ memory per node and messages of size $\Theta(1)$, and there are at most two messages in transit along any given edge at any time.
\end{lemma}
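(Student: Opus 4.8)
The plan is to argue three things separately: the $\Theta(1)$ message size, the $\bigo{\Delta}$ per-node memory, and the bound of at most two messages in transit per edge. For the message size, I would simply inspect the message types used by the algorithm: \msg{prepare}{}, \msg{ready}{}, \msg{request-lock}{$p$}, \msg{win}{$b$}, \msg{set-lock}{}, \msg{ack-lock}{}, \msg{release-lock}{}, and \msg{ack-unlock}{}. Each carries at most a constant-size payload: the message name is drawn from a fixed finite set, $b \in \{\true, \false\}$ is one bit, and $p \in \{0,\ldots,K-1\}$ with $K = \Theta(1)$ is $\Theta(\log K) = \Theta(1)$ bits. Since port labels are attached implicitly by the channel and not embedded in the message, no message carries $\omega(1)$ information, so all messages have size $\Theta(1)$.

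For the memory bound, I would walk through Table~\ref{tab:variables}. The scalar variables $\lock$, $\state$, and $\phase$ each take values in fixed finite domains (of sizes $\Delta+2$, $6$, and $3$ respectively), so they cost $\bigo{\log \Delta}$ bits, i.e.\ $\bigo{1}$ words. The set-valued variables $\lockset, \replyset, \holdset, \applyset, \candidateset$ are all subsets of $N[u]$, hence representable with $\Delta+1$ bits each; $\winset \subseteq N[u] \times \{\true,\false\}$ and $\compete \subseteq C(u) \times \{0,\ldots,K-1\}$ attach $\bigo{1}$ extra bits per port, so they are $\bigo{\Delta}$ bits. Summing over the constantly many variables gives $\bigo{\Delta}$ memory per node. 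I would also note that the disconnection detector set $\disconnectset \subseteq \{1,\ldots,\Delta\}$ fits in $\bigo{\Delta}$ bits, and that the transient local computation within any single action execution (e.g.\ iterating over $\lockset$) uses only $\bigo{\Delta}$ scratch space.

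For the ``at most two messages in transit per edge'' claim, the key observation is that the algorithm is strictly request–response structured along each edge, in each direction. I would argue that along any edge $\{u,v\}$, node $u$ sends a message toward $v$ only in response to having received the previous message from $v$ (or as the single initiating \msg{prepare}{} or \msg{request-lock}{} that opens a handshake), and symmetrically for $v$; moreover each \textsc{Receive*} action emits at most one message back per port. Since the \textsc{Send} semantics allow reordering but not duplication, and since a node never sends a second message in a given direction before the matching reply to its outstanding one arrives, there is at most one message in transit from $u$ to $v$ and at most one from $v$ to $u$ at any time, for a total of at most two on the edge.

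The main obstacle I anticipate is the last part: one must carefully check every pair of consecutive \textsc{Send} calls the algorithm makes along a single port to confirm that no action ever ``gets ahead'' of its partner — in particular that the re-competition path in \textsc{CheckWin} (which fires \msg{request-lock}{} again) only does so after all prior \msg{win}{} messages were received, that \msg{prepare}{}/\msg{ready}{}/\msg{request-lock}{}/\msg{win}{} alternate properly even when a participant moves a port through the $\applyset \to \candidateset \to \holdset$ transitions, and that the \textsc{CleanUp}-triggered resend of \msg{ready}{} to held ports in the $\candidateset = \emptyset$ case cannot coexist with another in-flight message to the same port. I would handle this by a short case analysis on the \phase/\state of each endpoint, observing that each phase permits at most one outstanding outbound message per port; the rest of the lemma is a routine bookkeeping inspection of Table~\ref{tab:variables} and the message grammar.
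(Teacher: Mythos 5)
Your treatment of the message size and the $\bigo{\Delta}$ memory bound matches the paper's (inspect the constant set of message types with $\Theta(1)$ payloads; store the set-valued variables as length-$\Delta$ bit registers), so those parts are fine. The problem is the third part. Your key invariant---``at most one message in transit from $u$ to $v$ and at most one from $v$ to $u$''---is false, because a node plays the roles of initiator and participant simultaneously and these are two \emph{independent} request--response handshakes sharing the same directed channel. Concretely: $u$ calls \Lock\ and sends \msg{prepare}{} to $v$; before it is delivered, $v$ also calls \Lock\ and sends \msg{prepare}{} to $u$; $u$ processes it in \textsc{ReceivePrepare} and replies \msg{ready}{}. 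Now $u$'s own \msg{prepare}{} and $u$'s \msg{ready}{} (a response to $v$'s operation, not a reply to $u$'s outstanding request) are both in transit in the direction $u \to v$. So your statement that ``a node never sends a second message in a given direction before the matching reply to its outstanding one arrives,'' and likewise the per-phase claim of ``at most one outstanding outbound message per port,'' do not hold, and the case analysis you sketch would break exactly at this point.

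The total bound of two is still correct, but the accounting has to be done per initiator operation rather than per direction, which is how the paper argues it: each \Lock\ (resp.\ \Unlock) operation is a strict alternation of message pairs (\msg{prepare}{}/\msg{ready}{}, \msg{request-lock}{}/\msg{win}{}, \msg{set-lock}{}/\msg{ack-lock}{}, resp.\ \msg{release-lock}{}/\msg{ack-unlock}{}), and the initiator does not send the next batch until all responses of the current pair are processed, so each operation contributes at most one in-flight message per edge at any time; moreover the guards $\state = \bot$ for \textsc{InitLock} and $\state = \textsc{locked}$ for \textsc{InitUnlock} ensure each node runs at most one such operation at a time, and an edge $\{u,v\}$ has only its two endpoints as possible initiators, giving at most $1+1=2$ messages in transit (possibly both in the same direction, as in the scenario above). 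If you recast your handshake argument along these lines---one outstanding message per initiator operation, two possible initiator operations per edge---the proof goes through.
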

\begin{proof}
    Table~\ref{tab:variables} shows that \phase\ and \state\ can be stored in $\Theta(1)$ bits each and \lock\ can be stored in $\log_2\Delta$ bits.
    The remaining variables can be represented as linear registers of length $\Delta$, where port $\ell$ is in the set variable $X$ if and only if the $\ell$-th bit of register $X$ is \true.
    So the memory bound of $\bigo{\Delta}$ follows.
    Similarly, there are a constant number of message types, among which only \msg{request-lock}{} and \msg{win}{} carry additional data.
    A \msg{win}{} message carries one bit signaling whether a competition trial was won or lost.
    A \msg{request-lock}{} message carries a randomly chosen priority, which by Lemma~\ref{lem:opentrials} can be stored in $\Theta(1)$ bits.
    
    To bound the number of messages in transit per edge per time, consider the execution of a \Lock\ operation by an initiator node $u$.
    The local mutual exclusion algorithm is structured around pairs of initiator messages and participant responses: \msg{prepare}{}/\msg{ready}{} messages in the preparation phase, \msg{request-lock}{}/\msg{win}{} messages in each competition trial, and \msg{set-lock}{}/\msg{ack-lock}{} messages once a node has won a trial.
    In each scenario, only one message per pair is in transit along $\{u, v\}$ per time for each $v \in \lockset(u)$.
    Moreover, node $u$ does not advance to the next phase and send any additional messages until all messages of the current phase are processed.
    An analogous argument applies to the \Unlock\ operation with its \msg{release-lock}{}/\msg{ack-unlock}{} message pairs.
    Thus, there can be at most one message in transit per edge per time involved with any initiator's \Lock\ or \Unlock\ operation.
    
    Furthermore, an initiator $u$ can execute at most one \Lock\ or \Unlock\ operation per time since $u$ can only start a \Lock\ operation by executing \textsc{InitLock} if $\state(u) = \bot$, implying it holds no locks; similarly, $u$ can only start an \Unlock\ operation by executing \textsc{InitUnlock} if $\state(u) = \textsc{locked}$, implying its previous \Lock\ operation has succeeded.
    
    Thus, the lemma follows since there are at most two initiators $u$ and $v$ per edge $\{u, v\}$.
\end{proof}

\section{Extending to Asynchronous Concurrency} \label{sec:async}

Section~\ref{sec:analysis} proved Theorem~\ref{thm:locking} under semi-synchronous concurrency in which (\textit{i}) topological changes occur at discrete times in between rounds of action executions and (\textit{ii}) the adversary chooses any non-empty subset of nodes to act in each round and those nodes' action executions are guaranteed to end before the next round begins.
In this section, we prove that Theorem~\ref{thm:locking} holds even in the more general asynchronous setting.

All assumptions from Section~\ref{subsec:model} about the time-varying graph $\mathcal{G}$, the nodes, their asynchronous message passing, and the structure of algorithms and their actions remain the same.
However, in an \textit{asynchronous schedule}, the adversary can schedule action executions over arbitrary finite time intervals, including those that are concurrent with topological changes and span multiple TVG rounds.
In this setting, our prior assumptions about the disconnection detector now imply that any topological changes incident to $u$ that are concurrent with one of its action executions are not observed or processed by $u$ until its next action execution.
We further assume for the asynchronous setting that any message sent by node $u$ during one of its action executions starting at time $t_1$ is processed by a node $v$ during some other action execution starting at time $t_2 > t_1$ if and only if the edge $\{u, v\} \in G_t$ for all $t \in [t_1, t_2]$.
This implies that when an edge is disconnected, all messages in transit along that edge are immediately lost and no further messages can be sent or received by the corresponding ports until the corresponding action executions have finished.

\begin{lemma} \label{lem:reduction}
    For any asynchronous schedule $\mathcal{S}$, there exists a semi-synchronous schedule $\mathcal{S}'$ containing the same action executions as in $\mathcal{S}$ that produces the same outcome for every action execution in $\mathcal{S}$.
\end{lemma}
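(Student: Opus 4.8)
The plan is to construct $\mathcal{S}'$ from $\mathcal{S}$ by ``serializing'' the asynchronous action executions: since each action execution in $\mathcal{S}$ occupies a finite time interval, I will order the executions by their start times (breaking ties arbitrarily but consistently) and place each one into its own round of $\mathcal{S}'$, one execution per round. The key claim to verify is that this reordering does not change the input seen by any action execution, and hence not its output (state updates and messages sent). An action execution's outcome depends on exactly three things: (i) the executing node's local state at the start of the execution, (ii) the single message (if any) the adversary delivers, together with the port it arrives on, and (iii) the snapshot of the disconnection set $\disconnectset$ handed to the node at the start of the execution. I need to argue each of these is preserved.

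First I would handle the state dependence. A node executes its actions sequentially, so in $\mathcal{S}$ the executions of any single node $u$ are already totally ordered by time and non-overlapping; placing them in that same relative order in $\mathcal{S}'$ (each in its own round) means $u$'s state evolves identically. Next, the messages: by the asynchronous delivery assumption in Section~\ref{sec:async}, a message sent by $u$ during an execution starting at $t_1$ is received by $v$ during an execution starting at $t_2 > t_1$ only if the edge $\{u,v\}$ is present throughout $[t_1, t_2]$. Since $\mathcal{S}'$ respects the start-time order, the sending execution still precedes the receiving execution, so the same message can be delivered to the same port; I simply have the adversary in $\mathcal{S}'$ make the same delivery choices. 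Finally, the disconnection detector: I must ensure that the set $\disconnectset$ reported to $u$ at the start of each of its executions is the same in $\mathcal{S}'$ as in $\mathcal{S}$. In $\mathcal{S}$, $\disconnectset$ accumulates exactly those port labels whose edges were severed since $u$'s previous execution; I place the corresponding topological changes in $\mathcal{S}'$ into the round-boundaries between $u$'s consecutive executions (any consistent choice works, e.g., just before the round of the execution that first observes them), so the same labels are reported.

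The main obstacle is consistency across nodes: a single edge $\{u,v\}$ being disconnected must be reported to $u$ at its next execution and to $v$ at its next execution, and these two ``next executions'' may be far apart in time and interleaved with many other executions. I must choose a single round-boundary in $\mathcal{S}'$ at which the edge $\{u,v\}$ disappears, and this single choice must be consistent with what both $u$ and $v$ observed in $\mathcal{S}$ — in particular it must fall after both endpoints' last executions that still saw the edge and before both endpoints' first executions that report it gone. The assumption that topological changes concurrent with an execution are only observed at the \emph{next} execution is what gives the needed slack: there is always a nonempty gap of round-boundaries in $\mathcal{S}'$ into which the disconnection event can be slotted. Once this placement is shown to exist for every topological change (a short argument using the total order on executions and the fact that each edge has at most two incident endpoints), every action execution in $\mathcal{S}'$ sees exactly the state, message, and $\disconnectset$ snapshot it saw in $\mathcal{S}$, and an induction on the round order of $\mathcal{S}'$ concludes that outcomes coincide. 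I would also note that weak fairness is inherited: every execution of $\mathcal{S}$ appears in $\mathcal{S}'$, so no continuously enabled action is starved and no message on a continuously present edge is left unprocessed.
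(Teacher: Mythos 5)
Your proposal is correct and takes essentially the same route as the paper: the paper formalizes your start-time ordering via a Lamport-style causal relation on executions, compresses each execution to zero duration at its original start time (shifting any execution that coincides with an edge change to just before that change), and then inserts filler rounds---so the topological changes simply keep their original positions relative to the execution start times and your ``placement'' obstacle dissolves into the construction itself. One small imprecision: an execution's outcome can also depend on the current port/neighborhood configuration (e.g., $\lockset \gets N[u]$ in \textsc{InitLock}), not only on the local state, the delivered message, and the $\disconnectset$ snapshot, but your boundary placement of edge changes preserves that data as well, so the induction still goes through.
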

\begin{proof}
    Consider any asynchronous schedule $\mathcal{S}$ of the local mutual exclusion algorithm and let $\mathcal{E}$ be the set of all action executions in $\mathcal{S}$.
    Analogous to Lamport~\cite{Lamport1978-timeclocks}, we define the \textit{causal relation} $\to$ on $\mathcal{E}$ as the smallest relation satisfying the following three conditions:
    \begin{itemize}
        \item If $\alpha \in \mathcal{E}$ is an execution by node $u$ and $\beta \in \mathcal{E}$ is the next execution by $u$, then $\alpha \to \beta$.
        \item If a message sent in $\alpha \in \mathcal{E}$ is processed in $\beta \in \mathcal{E}$, then $\alpha \to \beta$.
        \item If $\alpha \to \beta$ and $\beta \to \gamma$, then $\alpha \to \gamma$.
    \end{itemize}
    Since all causal relations are naturally forward in time, it follows that the graph represented by the causal relations on $\mathcal{E}$ forms a DAG.
    Thus, the action executions of $\mathcal{E}$ can be topologically sorted in some order $[\alpha_1, \alpha_2, \ldots]$.
    
    Now, consider the schedule $\hat{\mathcal{S}}$ containing the same action executions starting at the same times as those in $\mathcal{S}$, but (\textit{i}) each action execution takes 0 time and (\textit{ii}) any set of action executions starting at the same time as some edge changes is shifted before these edge changes without changing the order of the action executions.
    Then $\hat{\mathcal{S}}$ can be transformed into a semi-synchronous schedule $\mathcal{S}'$ by adding filler time steps when no edges change so that each node executes at most one action per round and all action executions between two time steps start at the same time.
    Certainly, $\mathcal{S}'$ is still a valid schedule since all causal relations remain forward in time and---by our assumption on asynchronous message processing---any message sent by action execution $\alpha$ that is processed by action execution $\beta$ in $\mathcal{S}$ can still be processed by $\beta$ in $\mathcal{S}'$.
    Furthermore, since the causal relations haven't changed, the action executions in $\mathcal{S}'$ can be sorted in the same order $[\alpha_1, \alpha_2, \ldots]$ as for $\mathcal{S}$.
    Since any action execution can only change a node's state or send messages and, in both schedules, it only sees a snapshot of $\disconnectset$ at its start, it follows by induction on the ordering of the action executions that for any $i$, the outcome of $\alpha_i$ is identical in $\mathcal{S}$ and $\mathcal{S}'$.
\end{proof}

As in the semi-synchronous setting, the mutual exclusion property is trivially satisfied in the asynchronous setting.
But suppose to the contrary that there exists an asynchronous schedule in which at least one \Lock\ operation never succeeds.
Lemma~\ref{lem:reduction} shows that there must exist a semi-synchronous schedule in which at least one \Lock\ operation never succeeds, contradicting Theorem~\ref{thm:locking}.
So we have the following corollary.

\begin{corollary}
    The local mutual exclusion algorithm also satisfies the mutual exclusion and lockout freedom properties under any asynchronous schedule.
\end{corollary}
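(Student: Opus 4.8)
The plan is to prove the corollary by reduction to the semi-synchronous setting already established in Theorem~\ref{thm:locking}, with Lemma~\ref{lem:reduction} serving as the bridge. The mutual exclusion property requires no new argument: as observed in Section~\ref{subsec:problem}, each \lock\ variable of a node points to at most one node at any time, so $\mathcal{L}_i(u) \cap \mathcal{L}_i(v) = \emptyset$ holds vacuously no matter how executions are scheduled, in particular under any asynchronous schedule. Hence the only substantive claim is lockout freedom.

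For lockout freedom I would argue by contradiction. Suppose some asynchronous run $\mathcal{S}$ of the algorithm contains a \Lock\ operation issued by a node $u$ in round $i$ that never succeeds, i.e., there is no round $j > i$ with $\mathcal{L}_j(u) = \{u\} \cup \{v \in N_i(u) : \forall t \in [i,j],\ \{u,v\} \in G_t\}$. Apply Lemma~\ref{lem:reduction} to obtain a semi-synchronous schedule $\mathcal{S}'$ with exactly the same action executions, each producing the same outcome. The key step is to note that whether a lock request succeeds is a function only of the outcomes of the action executions (which are what set the \lock\ variables) together with the edge presences of the underlying TVG, and the construction in Lemma~\ref{lem:reduction} preserves both: it leaves $\mathcal{G}$ untouched, only reorders the executions (shifting any that coincide with edge changes before those changes and padding with filler steps), keeps every execution's outcome fixed, and preserves the relative order of edge changes and executions. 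Consequently the persistent neighborhood of $u$ with respect to any pair of rounds, and the earliest round at which $u$'s \lock\ variables collectively reflect it, are identical in $\mathcal{S}'$ and $\mathcal{S}$; so the same \Lock\ operation never succeeds in $\mathcal{S}'$ either, contradicting Theorem~\ref{thm:locking}. Therefore no such asynchronous run exists.

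The main obstacle is a matter of care rather than difficulty: carrying the phrase ``with probability 1'' through the reduction. A clean way is to treat a ``schedule'' in Lemma~\ref{lem:reduction} as a full run with the random priority choices already realized, so that the reduction is a deterministic, choice-preserving map from asynchronous runs to semi-synchronous runs that sends the event ``$u$'s request never succeeds'' to the analogous event for $\mathcal{S}'$ without disturbing the random bits; a positive-probability failure set in the asynchronous model would then yield a positive-probability failure set in the semi-synchronous model, contradicting Theorem~\ref{thm:locking}. One should additionally observe that at most one lock request is issued per node per time, so there are countably many requests over the (possibly infinite) lifetime and a union bound preserves the probability-1 guarantee across all of them. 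Once these bookkeeping points are dispatched, the corollary follows directly.
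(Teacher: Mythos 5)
Your proof is correct and follows essentially the same route as the paper: mutual exclusion is noted to hold trivially, and lockout freedom is obtained by contradiction via the reduction of Lemma~\ref{lem:reduction} to a semi-synchronous schedule, contradicting Theorem~\ref{thm:locking}. Your extra care about preserving the realized random choices and the probability-1 guarantee through the reduction is a reasonable refinement of the same argument, not a different approach.
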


\section{Applications} \label{sec:applications}

We next establish how our algorithm for local mutual exclusion can be used to implement key assumptions present in formal models of dynamic distributed systems.
In particular, we focus on the assumptions of independent pairwise interactions in \textit{population protocols}~\cite{Angluin2006-computationnetworks} and concurrency control operations in the \textit{canonical amoebot model} of programmable matter~\cite{Daymude2021-canonicalamoebot}.

\subparagraph*{Population Protocols.}

Inspired by passively mobile sensor networks, Angluin et al.~\cite{Angluin2006-computationnetworks} proposed the \textit{population protocols} model.
Each of agent in a population is assumed to have a finite state and a transition function defining how that state evolves as a result of a pairwise interaction with another agent.
Agents cannot explicitly control their movements or who they interact with; i.e., they are passively dynamic.
Instead, it is typically assumed that a sequential scheduler chooses one pair of agents to interact per time step.
In reality, however, many agents within interacting distance might exist concurrently (see, e.g.,~\cite{Czumaj2021-trulyparallel}), requiring a mechanism to organize these agents into a matching of independent pairs.

This goal could be achieved directly using our algorithm for local mutual exclusion.
Any agent $u$ that wants to interact must first call \Lock.
On success, $u$ then chooses any locked neighbor to interact with, if it has one; if desired, one could even generalize the usual pairwise interactions to interactions among the full group of locked neighbors.
Lockout freedom ensures $u$ will eventually be allowed to make this choice, and mutual exclusion ensures this pairwise interaction is isolated from any others.
After interacting, $u$ then releases its locks with \Unlock.
If the expected number of competing agents is high, an alternative implementation of our algorithm could have $u$ make its choice of interacting neighbor $v$ first and then try to lock only $u$ and $v$ to avoid a lengthy competition.
On success, $u$ would then interact with $v$, isolated from any other interactions, and then unlock itself and $v$.
In both implementations, it is possible that all neighbors may move out of interaction range, leaving $u$ to lock only itself.
In this situation, no interaction occurs and $u$ simply unlocks itself.

Both implementations require $\bigo{\Delta}$ memory per agent and messages of size $\Theta(1)$.
For many applications of population protocols where $\Delta$ is a fixed constant (e.g., proximity graphs or IoT), these requirements are reduced to $\Theta(1)$.
Thus, our algorithm for local mutual exclusion could provide isolated pairwise interactions assumed by population protocols even in the presence of underlying network dynamics and asynchronous concurrency.

\subparagraph*{The Canonical Amoebot Model.}

The \textit{amoebot model} abstracts active programmable matter as a collection of simple computational elements called \textit{amoebots} that move and interact locally to collectively achieve tasks of coordination and movement.
Each amoebot is typically assumed to be anonymous and have only constant-size memory, but can control its movements.
The \textit{canonical amoebot model}~\cite{Daymude2021-canonicalamoebot} is an updated formalization that addresses concurrency by partitioning amoebot functionality into a high-level application layer where algorithms call various operations and a low-level system layer where those operations are executed via asynchronous message passing.
Two such operations are the concurrency control operations, \Lock\ and \Unlock, which are used in a \textit{concurrency control framework} to convert amoebot algorithms that terminate in the sequential setting and satisfy certain conventions into algorithms that exhibit equivalent behavior in the concurrent setting~\cite{Daymude2021-canonicalamoebot}.

The canonical amoebot model treats the \Lock\ and \Unlock\ operations as black boxes without giving an implementation.
These operations facilitate amoebots gaining exclusive access to themselves and their neighbors, much like our mutual exclusion property, and are assumed to terminate (either successfully or in failure) in finite time.
The asynchronous extension of our local mutual exclusion algorithm presented in Section~\ref{sec:async} could directly implement these operations, ensuring isolation of concurrent amoebot actions even as connections between amoebots change due to their movements.
One interesting feature of such an implementation is that while the amoebot \Lock\ operation is allowed to fail---which must be taken into account by algorithm designers---our \Lock\ operation always succeeds due to lockout freedom, reducing complexity in algorithm design.
Moreover, for the often-considered geometric space variant in which an (expanded) amoebot can have at most eight neighbors, our algorithm has $\Theta(1)$ amoebot memory and message size requirements.

\section{Conclusion} \label{sec:conclude}

We presented an algorithm for local mutual exclusion that enables weakly capable nodes to isolate concurrent actions involving their persistent neighborhoods despite dynamic network topology.
Our algorithm ensures that nodes belong to at most one locked neighborhood at a time (mutual exclusion) and that every lock request eventually succeeds (lockout freedom).
It requires $\bigo{\Delta}$ memory per node and messages of size $\Theta(1)$---where $\Delta$ is the maximum number of connections per node---and is compatible with anonymous, message passing nodes that operate semi-synchronously or asynchronously.
These weak requirements make our algorithm suitable for a wide range of application domains such as overlay networks, IoT, modular robots, and programmable matter.
As two concrete examples, we demonstrated how our algorithm could implement the pairwise interactions assumed by population protocols~\cite{Angluin2006-computationnetworks} and the concurrency control operations assumed by the canonical amoebot model~\cite{Daymude2021-canonicalamoebot}.

\bibliographystyle{plainurl}
\bibliography{ref}

\begin{thebibliography}{10}

\bibitem{Altisen2021-selfstabilizingsystems}
Karine Altisen, St{\'e}phane Devismes, Ana{\"i}s Durand, Colette Johnen, and
  Franck Petit.
\newblock {Self-stabilizing Systems in Spite of High Dynamics}.
\newblock In {\em International Conference on Distributed Computing and
  Networking 2021}, pages 156--165, 2021.
\newblock \href {https://doi.org/10.1145/3427796.3427838}
  {\path{doi:10.1145/3427796.3427838}}.

\bibitem{AndresArroyo2018-stochasticapproach}
Marta Andr{\'e}s~Arroyo, Sarah Cannon, Joshua~J. Daymude, Dana Randall, and
  Andr{\'e}a~W. Richa.
\newblock A stochastic approach to shortcut bridging in programmable matter.
\newblock {\em Natural Computing}, 17(4):723--741, 2018.
\newblock \href {https://doi.org/10.1007/s11047-018-9714-x}
  {\path{doi:10.1007/s11047-018-9714-x}}.

\bibitem{Angluin2006-computationnetworks}
Dana Angluin, James Aspnes, Zo{\"e} Diamadi, Michael~J. Fischer, and Ren{\'e}
  Peralta.
\newblock Computation in networks of passively mobile finite-state sensors.
\newblock {\em Distributed Computing}, 18(4):235--253, 2006.
\newblock \href {https://doi.org/10.1007/s00446-005-0138-3}
  {\path{doi:10.1007/s00446-005-0138-3}}.

\bibitem{Attiya2010-efficientrobust}
Hagit Attiya, Alex Kogan, and Jennifer~L. Welch.
\newblock {Efficient and Robust Local Mutual Exclusion in Mobile Ad Hoc
  Networks}.
\newblock {\em IEEE Transactions on Mobile Computing}, 9(3):361--375, 2010.
\newblock \href {https://doi.org/10.1109/TMC.2009.137}
  {\path{doi:10.1109/TMC.2009.137}}.

\bibitem{Awerbuch1989-networkdecomposition}
Baruch Awerbuch, Michael Luby, Andrew~V. Goldberg, and Serge~A. Plotkin.
\newblock Network decomposition and locality in distributed computation.
\newblock In {\em 30th Annual Symposium on Foundations of Computer Science},
  pages 364--369, 1989.
\newblock \href {https://doi.org/10.1109/SFCS.1989.63504}
  {\path{doi:10.1109/SFCS.1989.63504}}.

\bibitem{Baldoni2002-distributedmutual}
Roberto Baldoni, Antonino Virgillito, and Roberto Petrassi.
\newblock A distributed mutual exclusion algorithm for mobile ad-hoc networks.
\newblock In {\em Proceedings ISCC 2002 Seventh International Symposium on
  Computers and Communications}, pages 539--544, 2002.
\newblock \href {https://doi.org/10.1109/ISCC.2002.1021727}
  {\path{doi:10.1109/ISCC.2002.1021727}}.

\bibitem{Benchaiba2004-distributedmutual}
Mahfoud Bencha{\"i}ba, Abdelmadjid Bouabdallah, Nadjib Badache, and Mohamed
  {Ahmed-Nacer}.
\newblock {Distributed Mutual Exclusion Algorithms in Mobile Ad Hoc Networks:
  An Overview}.
\newblock {\em ACM SIGOPS Operating Systems Review}, 38(1):74--89, 2004.
\newblock \href {https://doi.org/10.1145/974104.974111}
  {\path{doi:10.1145/974104.974111}}.

\bibitem{Bender2005-adversarialcontention}
Michael~A. Bender, Martin {Farach-Colton}, Simai He, Bradley~C. Kuszmaul, and
  Charles~E. Leiserson.
\newblock {Adversarial Contention Resolution for Simple Channels}.
\newblock In {\em Proceedings of the Seventeenth Annual ACM Symposium on
  Parallelism in Algorithms and Architectures}, pages 325--332, 2005.
\newblock \href {https://doi.org/10.1145/1073970.1074023}
  {\path{doi:10.1145/1073970.1074023}}.

\bibitem{Cali2000-ieee80211}
Federico Cali, Marco Conti, and Enrico Gregori.
\newblock {IEEE 802.11 Protocol: Design and Performance Evaluation of an
  Adaptive Backoff Mechanism}.
\newblock {\em IEEE Journal on Selected Areas in Communications},
  18(9):1774--1786, 2000.
\newblock \href {https://doi.org/10.1109/49.872963}
  {\path{doi:10.1109/49.872963}}.

\bibitem{Capetanakis1979-treealgorithms}
John~I. Capetanakis.
\newblock {Tree Algorithms for Packet Broadcast Channels}.
\newblock {\em IEEE Transactions on Information Theory}, 25(5):505--515, 1979.
\newblock \href {https://doi.org/10.1109/TIT.1979.1056093}
  {\path{doi:10.1109/TIT.1979.1056093}}.

\bibitem{Casteigts2018-journeythrough}
Arnaud Casteigts.
\newblock {A Journey through Dynamic Networks (with Excursions)}, 2018.
\newblock HDR, available online at
  \url{https://hal.archives-ouvertes.fr/tel-01883384/}.

\bibitem{Casteigts2012-timevaryinggraphs}
Arnaud Casteigts, Paola Flocchini, Walter Quattrociocchi, and Nicola Santoro.
\newblock {Time-Varying Graphs and Dynamic Networks}.
\newblock {\em International Journal of Parallel, Emergent and Distributed
  Systems}, 27(5):387--408, 2012.
\newblock \href {https://doi.org/10.1080/17445760.2012.668546}
  {\path{doi:10.1080/17445760.2012.668546}}.

\bibitem{Chalk2018-freezingsimulates}
Cameron Chalk, Austin Luchsinger, Eric Martinez, Robert Schweller, Andrew
  Winslow, and Tim Wylie.
\newblock {Freezing Simulates Non-freezing Tile Automata}.
\newblock In {\em DNA Computing and Molecular Programming}, volume 11145 of
  {\em Lecture Notes in Computer Science}, pages 155--172, 2018.
\newblock \href {https://doi.org/10.1007/978-3-030-00030-1_10}
  {\path{doi:10.1007/978-3-030-00030-1_10}}.

\bibitem{Chandrasekhar2018-trailrepair}
Arjun Chandrasekhar, Deborah~M. Gordon, and Saket Navlakha.
\newblock A distributed algorithm to maintain and repair the trail networks of
  arboreal ants.
\newblock {\em Scientific Reports}, 8(1):9297, 2018.
\newblock \href {https://doi.org/10.1038/s41598-018-27160-3}
  {\path{doi:10.1038/s41598-018-27160-3}}.

\bibitem{Chen2005-selfstabilizingdynamic}
Yu~Chen and Jennifer~L. Welch.
\newblock Self-stabilizing dynamic mutual exclusion for mobile ad hoc networks.
\newblock {\em Journal of Parallel and Distributed Computing},
  65(9):1072--1089, 2005.
\newblock \href {https://doi.org/10.1016/j.jpdc.2005.03.009}
  {\path{doi:10.1016/j.jpdc.2005.03.009}}.

\bibitem{Czumaj2021-trulyparallel}
Artur Czumaj and Andrzej Lingas.
\newblock {On Truly Parallel Time in Population Protocols}.
\newblock Available online at \url{https://arxiv.org/abs/2108.11613}, 2021.

\bibitem{Daymude2021-canonicalamoebot}
Joshua~J. Daymude, Andr{\'e}a~W. Richa, and Christian Scheideler.
\newblock {The Canonical Amoebot Model: Algorithms and Concurrency Control}.
\newblock In {\em 35th International Symposium on Distributed Computing (DISC
  2021)}, volume 209 of {\em Leibniz International Proceedings in Informatics
  (LIPIcs)}, pages 20:1--20:19, 2021.
\newblock \href {https://doi.org/10.4230/LIPIcs.DISC.2021.20}
  {\path{doi:10.4230/LIPIcs.DISC.2021.20}}.

\bibitem{Demmer1998-arrowdistributed}
Michael~J. Demmer and Maurice~P. Herlihy.
\newblock The arrow distributed directory protocol.
\newblock In Shay Kutten, editor, {\em Distributed Computing}, volume 1499 of
  {\em Lecture Notes in Computer Science}, pages 119--133, 1998.
\newblock \href {https://doi.org/10.1007/BFb0056478}
  {\path{doi:10.1007/BFb0056478}}.

\bibitem{Derakhshandeh2014-amoebotba}
Zahra Derakhshandeh, Shlomi Dolev, Robert Gmyr, Andr{\'e}a~W. Richa, Christian
  Scheideler, and Thim Strothmann.
\newblock Amoebot - a new model for programmable matter.
\newblock In {\em Proceedings of the 26th ACM Symposium on Parallelism in
  Algorithms and Architectures}, pages 220--222, 2014.
\newblock \href {https://doi.org/10.1145/2612669.2612712}
  {\path{doi:10.1145/2612669.2612712}}.

\bibitem{Dijkstra1965-mutualexclusion}
E.~W. Dijkstra.
\newblock Solution of a problem in concurrent programming control.
\newblock {\em Communications of the ACM}, 8(9):569, 1965.
\newblock \href {https://doi.org/10.1145/365559.365617}
  {\path{doi:10.1145/365559.365617}}.

\bibitem{Emek2013-stoneage}
Yuval Emek and Roger Wattenhofer.
\newblock Stone age distributed computing.
\newblock In {\em Proceedings of the 2013 ACM Symposium on Principles of
  Distributed Computing}, pages 137--146, Montr\'eal, Qu\'ebec, Canada, 2013.
  ACM.
\newblock \href {https://doi.org/10.1145/2484239.2484244}
  {\path{doi:10.1145/2484239.2484244}}.

\bibitem{Esparza2020-classificationweak}
Javier Esparza and Fabian Reiter.
\newblock {A Classification of Weak Asynchronous Models of Distributed
  Computing}.
\newblock In {\em 31st International Conference on Concurrency Theory (CONCUR
  2020)}, volume 171 of {\em Leibniz International Proceedings in Informatics
  (LIPIcs)}, pages 10:1--10:16. Schloss Dagstuhl \textendash{} Leibniz-Zentrum
  f\"ur Informatik, 2020.
\newblock \href {https://doi.org/10.4230/LIPIcs.CONCUR.2020.10}
  {\path{doi:10.4230/LIPIcs.CONCUR.2020.10}}.

\bibitem{Feldmann2020-surveyalgorithms}
Michael Feldmann, Christian Scheideler, and Stefan Schmid.
\newblock {Survey on Algorithms for Self-stabilizing Overlay Networks}.
\newblock {\em ACM Computing Surveys}, 53(4):74:1--74:24, 2020.
\newblock \href {https://doi.org/10.1145/3397190} {\path{doi:10.1145/3397190}}.

\bibitem{Fischer1979-resourceallocation}
Michael~J. Fischer, Nancy~A. Lynch, James~E. Burns, and Allan Borodin.
\newblock Resource allocation with immunity to limited process failure.
\newblock In {\em 20th Annual Symposium on Foundations of Computer Science
  (SFCS 1979)}, pages 234--254, 1979.
\newblock \href {https://doi.org/10.1109/SFCS.1979.37}
  {\path{doi:10.1109/SFCS.1979.37}}.

\bibitem{Flocchini2019-distributedcomputing}
Paola Flocchini, Giuseppe Prencipe, and Nicola Santoro, editors.
\newblock {\em {Distributed Computing by Mobile Entities: Current Research in
  Moving and Computing}}, volume 11340 of {\em Lecture Notes in Computer
  Science}.
\newblock Springer International Publishing, Cham, 2019.
\newblock \href {https://doi.org/10.1007/978-3-030-11072-7}
  {\path{doi:10.1007/978-3-030-11072-7}}.

\bibitem{Ghaffari2015-distributedhousehunting}
Mohsen Ghaffari, Cameron Musco, Tsvetomira Radeva, and Nancy Lynch.
\newblock {Distributed House-Hunting in Ant Colonies}.
\newblock In {\em Proceedings of the 2015 ACM Symposium on Principles of
  Distributed Computing}, pages 57--66, 2015.
\newblock \href {https://doi.org/10.1145/2767386.2767426}
  {\path{doi:10.1145/2767386.2767426}}.

\bibitem{Ghodselahi2017-dynamicanalysis}
Abdolhamid Ghodselahi and Fabian Kuhn.
\newblock {Dynamic Analysis of the Arrow Distributed Directory Protocol in
  General Networks}.
\newblock In {\em 31st International Symposium on Distributed Computing (DISC
  2017)}, volume~91 of {\em Leibniz International Proceedings in Informatics
  (LIPIcs)}, pages 22:1--22:16, 2017.
\newblock \href {https://doi.org/10.4230/LIPICS.DISC.2017.22}
  {\path{doi:10.4230/LIPICS.DISC.2017.22}}.

\bibitem{Hamann2018-swarmrobotics}
Heiko Hamann.
\newblock {\em {Swarm Robotics: A Formal Approach}}.
\newblock Springer International Publishing, Cham, 2018.
\newblock \href {https://doi.org/10.1007/978-3-319-74528-2}
  {\path{doi:10.1007/978-3-319-74528-2}}.

\bibitem{Herlihy1991-waitfreesynchronization}
Maurice Herlihy.
\newblock Wait-free synchronization.
\newblock {\em ACM Transactions on Programming Languages and Systems},
  13(1):124--149, 1991.
\newblock \href {https://doi.org/10.1145/114005.102808}
  {\path{doi:10.1145/114005.102808}}.

\bibitem{Joung2000-asynchronousgroup}
Yuh-Jzer Joung.
\newblock Asynchronous group mutual exclusion.
\newblock {\em Distributed Computing}, 13(4):189--206, 2000.
\newblock \href {https://doi.org/10.1007/PL00008918}
  {\path{doi:10.1007/PL00008918}}.

\bibitem{Joung2002-congenialtalking}
Yuh-Jzer Joung.
\newblock The congenial talking philosophers problem in computer networks.
\newblock {\em Distributed Computing}, 15(3):155--175, 2002.
\newblock \href {https://doi.org/10.1007/s004460100069}
  {\path{doi:10.1007/s004460100069}}.

\bibitem{Khanchandani2019-arvydistributed}
Pankaj Khanchandani and Roger Wattenhofer.
\newblock {The Arvy Distributed Directory Protocol}.
\newblock In {\em The 31st ACM Symposium on Parallelism in Algorithms and
  Architectures}, pages 225--235, 2019.
\newblock \href {https://doi.org/10.1145/3323165.3323181}
  {\path{doi:10.1145/3323165.3323181}}.

\bibitem{Kuhn2018-deterministicdistributed}
Fabian Kuhn, Yannic Maus, and Simon Weidner.
\newblock {Deterministic Distributed Ruling Sets of Line Graphs}.
\newblock In {\em Structural Information and Communication Complexity}, volume
  11085 of {\em Lecture Notes in Computer Science}, pages 193--208, 2018.
\newblock \href {https://doi.org/10.1007/978-3-030-01325-7_19}
  {\path{doi:10.1007/978-3-030-01325-7_19}}.

\bibitem{Lamport1978-timeclocks}
Leslie Lamport.
\newblock {Time, Clocks, and the Ordering of Events in a Distributed System}.
\newblock {\em Communications of the ACM}, 21(7):558--565, 1978.
\newblock \href {https://doi.org/10.1145/359545.359563}
  {\path{doi:10.1145/359545.359563}}.

\bibitem{Maekawa1985-sqrtnmutex}
Mamoru Maekawa.
\newblock A {$\sqrt{N}$} algorithm for mutual exclusion in decentralized
  systems.
\newblock {\em ACM Transactions on Computer Systems}, 3(2):145--159, 1985.
\newblock \href {https://doi.org/10.1145/214438.214445}
  {\path{doi:10.1145/214438.214445}}.

\bibitem{Michail2021-distributedcomputation}
Othon Michail, George Skretas, and Paul~G. Spirakis.
\newblock {Distributed Computation and Reconfiguration in Actively Dynamic
  Networks}.
\newblock {\em Distributed Computing}, 2021.
\newblock \href {https://doi.org/10.1007/s00446-021-00415-5}
  {\path{doi:10.1007/s00446-021-00415-5}}.

\bibitem{Michail2016-simpleefficient}
Othon Michail and Paul~G. Spirakis.
\newblock Simple and efficient local codes for distributed stable network
  construction.
\newblock {\em Distributed Computing}, 29(3):207--237, 2016.
\newblock \href {https://doi.org/10.1007/s00446-015-0257-4}
  {\path{doi:10.1007/s00446-015-0257-4}}.

\bibitem{Michail2017-connectivitypreserving}
Othon Michail and Paul~G. Spirakis.
\newblock Connectivity preserving network transformers.
\newblock {\em Theoretical Computer Science}, 671:36--55, 2017.
\newblock \href {https://doi.org/10.1016/j.tcs.2016.02.040}
  {\path{doi:10.1016/j.tcs.2016.02.040}}.

\bibitem{Patitz2014-introductiontilebased}
Matthew~J. Patitz.
\newblock An introduction to tile-based self-assembly and a survey of recent
  results.
\newblock {\em Natural Computing}, 13(2):195--224, 2014.
\newblock \href {https://doi.org/10.1007/s11047-013-9379-4}
  {\path{doi:10.1007/s11047-013-9379-4}}.

\bibitem{Piranda2018-designingquasispherical}
Benoit Piranda and Julien Bourgeois.
\newblock Designing a quasi-spherical module for a huge modular robot to create
  programmable matter.
\newblock {\em Autonomous Robots}, 42:1619--1633, 2018.
\newblock \href {https://doi.org/10.1007/s10514-018-9710-0}
  {\path{doi:10.1007/s10514-018-9710-0}}.

\bibitem{Raymond1989-treemutex}
Kerry Raymond.
\newblock A tree-based algorithm for distributed mutual exclusion.
\newblock {\em ACM Transactions on Computer Systems}, 7(1):61--77, 1989.
\newblock \href {https://doi.org/10.1145/58564.59295}
  {\path{doi:10.1145/58564.59295}}.

\bibitem{Raynal2020-mutualexclusion}
Michel Raynal and Gadi Taubenfeld.
\newblock Mutual exclusion in fully anonymous shared memory systems.
\newblock {\em Information Processing Letters}, 158:105938, 2020.
\newblock \href {https://doi.org/10.1016/j.ipl.2020.105938}
  {\path{doi:10.1016/j.ipl.2020.105938}}.

\bibitem{Ricart1981-optimalalgorithm}
Glenn Ricart and Ashok~K. Agrawala.
\newblock An optimal algorithm for mutual exclusion in computer networks.
\newblock {\em Communications of the ACM}, 24(1):9--17, 1981.
\newblock \href {https://doi.org/10.1145/358527.358537}
  {\path{doi:10.1145/358527.358537}}.

\bibitem{Schneider2013-symmetrybreaking}
Johannes Schneider, Michael Elkin, and Roger Wattenhofer.
\newblock Symmetry breaking depending on the chromatic number or the
  neighborhood growth.
\newblock {\em Theoretical Computer Science}, 509:40--50, 2013.
\newblock \href {https://doi.org/10.1016/j.tcs.2012.09.004}
  {\path{doi:10.1016/j.tcs.2012.09.004}}.

\bibitem{Sharma2014-tokenbased}
Bharti Sharma, Ravinder~Singh Bhatia, and Awadhesh~Kumar Singh.
\newblock {A Token Based Protocol for Mutual Exclusion in Mobile Ad Hoc
  Networks}.
\newblock {\em Journal of Information Processing Systems}, 10(1):36--54, 2014.
\newblock \href {https://doi.org/10.3745/JIPS.2014.10.1.036}
  {\path{doi:10.3745/JIPS.2014.10.1.036}}.

\bibitem{Shehu2020-distributedmutual}
Harisu~Abdullahi Shehu, Md.~Haidar Sharif, and Rabie~A. Ramadan.
\newblock {Distributed Mutual Exclusion Algorithms for Intersection Traffic
  Problems}.
\newblock {\em IEEE Access}, 8:138277--138296, 2020.
\newblock \href {https://doi.org/10.1109/ACCESS.2020.3012573}
  {\path{doi:10.1109/ACCESS.2020.3012573}}.

\bibitem{Singhal1993-taxonomydistributed}
Mukesh Singhal.
\newblock {A Taxonomy of Distributed Mutual Exclusion}.
\newblock {\em Journal of Parallel and Distributed Computing}, 18(1):94--101,
  1993.
\newblock \href {https://doi.org/10.1006/jpdc.1993.1048}
  {\path{doi:10.1006/jpdc.1993.1048}}.

\bibitem{Soloveichik2008-computationfinite}
David Soloveichik, Matthew Cook, Erik Winfree, and Jehoshua Bruck.
\newblock Computation with finite stochastic chemical reaction networks.
\newblock {\em Natural Computing}, 7(4):615--633, 2008.
\newblock \href {https://doi.org/10.1007/s11047-008-9067-y}
  {\path{doi:10.1007/s11047-008-9067-y}}.

\bibitem{Su2019-spikebasedwinnertakeall}
Lili Su, Chia-Jung Chang, and Nancy Lynch.
\newblock {Spike-Based Winner-Take-All Computation: Fundamental Limits and
  Order-Optimal Circuits}.
\newblock {\em Neural Computation}, 31(12):2523--2561, 2019.
\newblock \href {https://doi.org/10.1162/neco_a_01242}
  {\path{doi:10.1162/neco_a_01242}}.

\bibitem{Walter2001-mutualexclusion}
Jennifer~E. Walter, Jennifer~L. Welch, and Nitin~H. Vaidya.
\newblock {A Mutual Exclusion Algorithm for Ad Hoc Mobile Networks}.
\newblock {\em Wireless Networks}, 7:585--600, 2001.
\newblock \href {https://doi.org/10.1023/A:1012363200403}
  {\path{doi:10.1023/A:1012363200403}}.

\bibitem{Woods2013-activeselfassembly}
Damien Woods, Ho-Lin Chen, Scott Goodfriend, Nadine Dabby, Erik Winfree, and
  Peng Yin.
\newblock Active self-assembly of algorithmic shapes and patterns in
  polylogarithmic time.
\newblock In {\em Proceedings of the 4th Conference on Innovations in
  Theoretical Computer Science}, pages 353--354, 2013.
\newblock \href {https://doi.org/10.1145/2422436.2422476}
  {\path{doi:10.1145/2422436.2422476}}.

\bibitem{Wu2015-distributedmutual}
Weigang Wu, Jiebin Zhang, Aoxue Luo, and Jiannong Cao.
\newblock {Distributed Mutual Exclusion Algorithms for Intersection Traffic
  Control}.
\newblock {\em IEEE Transactions on Parallel and Distributed Systems},
  26(1):65--74, 2015.
\newblock \href {https://doi.org/10.1109/TPDS.2013.2297097}
  {\path{doi:10.1109/TPDS.2013.2297097}}.

\bibitem{Yim2007-modularselfreconfigurable}
Mark Yim, Wei-Min Shen, Behnam Salemi, Daniela Rus, Mark Moll, Hod Lipson, Eric
  Klavins, and Gregory~S. Chirikjian.
\newblock {Modular Self-Reconfigurable Robot Systems [Grand Challenges of
  Robotics]}.
\newblock {\em IEEE Robotics \& Automation Magazine}, 14(1):43--52, 2007.
\newblock \href {https://doi.org/10.1109/MRA.2007.339623}
  {\path{doi:10.1109/MRA.2007.339623}}.

\end{thebibliography}

\end{document}